\newcommand{\C}{\mathbb{C}}
\newcommand{\Z}{\mathbb{Z}}
\newcommand{\Ell}{\mathcal{L}}
\newcommand{\X}{\mathcal{X}}
\DeclareMathOperator{\sn}{sn}
\renewcommand{\tilde}[1]{\widetilde{#1}}
\renewcommand*\env@cases[1][1.2]{%
  \let\@ifnextchar\new@ifnextchar
  \left\lbrace
  \def\arraystretch{#1}%
  \array{@{}l@{\quad}l@{}}%
}
\newtheorem{theo}{Theorem}[section]
\newtheorem{lemma}[theo]{Lemma}
\newtheorem{prop}[theo]{Proposition}
\newtheorem{cor}[theo]{Corollary}
\theoremstyle{definition}
\newtheorem{defi}[theo]{Definition}
\theoremstyle{remark}
\newtheorem*{rem}{Remark}
\begin{document}

\title{On integrability of discrete variational systems. Octahedron relations}
\author{Raphael~Boll\and Matteo~Petrera\and Yuri~B.~Suris}
\publishers{\vspace{0.5cm}{\small Institut f\"ur Mathematik, MA 7-2, Technische Universit\"at Berlin,\\
Stra{\ss}e des 17. Juni 136, 10623 Berlin, Germany\\
E-mail: \url{boll}, \url{petrera}, \url{suris@math.tu-berlin.de}}}
\maketitle

\begin{abstract}
\noindent We elucidate consistency of the so-called corner equations which are elementary building blocks of Euler-Lagrange equations for two-dimensional pluri-Lagrangian problems. We show that their consistency can be derived from the existence of two independent octahedron relations. We give explicit formulas for octahedron relations in terms of corner equations.

\medskip\noindent\emph{Subject areas:} mathematical physics, differential equations, field theory

\medskip\noindent\emph{Keywords:} discrete integrable systems, Euler-Lagrange equations, variational systems, pluri-Lagrangian systems, multi-dimensional consistency, fractional ideals
\end{abstract}

\section{Introduction}

This paper contributes to the Lagrangian theory of discrete integrable systems. This theory emerged recently, following several important developments. Its starting point is the understanding of integrability of discrete hyperbolic systems as their multi-dimensional consistency \cite{BS0, N}. This major breakthrough led to the classification of integrable quad-equations (discrete two-dimensional hyperbolic systems) \cite{ABS1}, which turned out to be rather influential. A further conceptual development was initiated by Lobb and Nijhoff \cite{LN} and deals with the variational (Lagrangian) formulation of discrete multi-dimensionally consistent systems. Their original idea can be summarized as follows: solutions of integrable quad-equations deliver critical points for actions along all possible quad-surfaces in multi-time; the Lagrangian form is closed on solutions.

Solutions of hyperbolic quad-equations do not exhaust critical points for actions along all possible quad-surfaces. In \cite{variational}, we  pushed forward the idea of considering the corresponding Euler-Lagrange equations and general solutions thereof as the proper notion of integrability for discrete variational systems. This is formalized in the following definition:
\begin{itemize}
\item Let $\Ell$ be a discrete 2-form, i.e., a real-valued function of oriented elementary squares
\[
\sigma_{ij}=(n,n+e_{i},n+e_{i}+e_{j},n+e_{j})
\]
of $\Z^m$, such that $\Ell(\sigma_{ij})=-\Ell(\sigma_{ji})$. It is assumed to depend on some field $x:\Z^m\to \X$  assigned to the points of $\Z^m$  ($\X$ being some vector space).
\item To an arbitrary oriented quad-surface $\Sigma$ in $\Z^m$, there corresponds the \emph{action functional}, which assigns to $x|_{V(\Sigma)}$, i.e., to the fields at the vertices of the surface $\Sigma$, the number
\begin{equation*}
S_{\Sigma}:=\sum_{\sigma_{ij}\in\Sigma}\Ell(\sigma_{ij}).
\end{equation*}
\item We say that the field $x:V(\Sigma)\to \X$ is a critical point of $S_{\Sigma}$, if at any interior point $n\in V(\Sigma)$, we have
\begin{equation}\label{eq: dEL gen}
    \frac{\partial S_{\Sigma}}{\partial x(n)}=0.
\end{equation}
Equations \eqref{eq: dEL gen} are called \emph{discrete Euler-Lagrange equations} for the action $S_{\Sigma}$.
\item We say that the field $x:\Z^m\to\X$ solves the \emph{pluri-Lagrangian problem} for the Lagrangian 2-form $\Ell$ if, \emph{for any quad-surface $\Sigma$ in $\Z^m$}, the restriction $x|_{V(\Sigma)}$ is a critical point of the corresponding action $S_{\Sigma}$.
\end{itemize}

Discrete Euler-Lagrange equations for the surface $\Sigma$ of a unit lattice cube are called {\em corner equations} and are, so to say, elementary particles, of which all possible Euler-Lagrange equations are built of. In general, the system of corner equations for one unit cube is heavily overdetermined, and its consistency is, in our view, synonymous with the integrability of the corresponding pluri-Lagrangian problem. We refer to \cite{variational} for details, as well as for some bibliographical and historical remarks concerning this definition.

It is the purpose of the present paper to contribute to a better understanding of algebraic mechanisms behind consistency of the system of corner equations for a class of discrete 2-forms coming from quad-equations of the ABS-list \cite{LN,BS1}. The corresponding system of corner equations consists of six equations per elementary cube $(x,x_{1},x_{2},x_{3},x_{12},x_{23},x_{13},x_{123})$, each depending on five out of the six variables $x_{1}$, $x_{2}$, $x_{3}$, $x_{12}$, $x_{23}$ and $x_{13}$. The system is \emph{consistent} if it has minimal possible rank $2$, i.e., if exactly two of these equations are independent. We will demonstrate that one can view consistency of the corner equations as a corollary of the existence of two \emph{octahedron relations}. The latter are multi-affine relations for the six variables $x_{1}$, $x_{2}$, $x_{3}$, $x_{12}$, $x_{23}$ and $x_{13}$, satisfied on each solution of corner equations, and, in their turn, having all six corner equations as their corollaries.


\section{Consistent systems of corner equations}\label{sec:corner}
\begin{defi}[System of corner equations]\label{def:corner system}
For a given discrete 2-form $\Ell$, the \emph{system of corner equations} consists of the discrete Euler-Lagrange equations for a surface of an elementary 3D cube.
\end{defi}
In the present paper, we consider consistent systems of corner equations having their origin in 3D consistent systems of quad-equations from the ABS-list \cite{ABS1}. Recall \cite{ABS1, LN} that the corresponding discrete 2-forms are of the following special, three-point shape:
\begin{equation*}
\Ell(\sigma_{ij})=\Ell(X,X_{i},X_{j};\alpha_{i},\alpha_{j})=
L(X,X_{i};\alpha_{i})-L(X,X_{j};\alpha_{j})-\Lambda(X_{i},X_{j};\alpha_{i},\alpha_{j}).
\end{equation*}
Setting
\[
\frac{\partial L(X,X_{i};\alpha_{i})}{\partial X}=\psi(X,X_{i};\alpha_{i}),\quad
\frac{\partial \Lambda(X_{i},X_{j};\alpha_{i},\alpha_{j})}{\partial X_{i}}=\phi(X_{i},X_{j};\alpha_{i},\alpha_{j}),
\]
we arrive at the following four-leg corner equations for the vertices $x_i$ and $x_{ij}$, respectively:
\begin{align}
  &  \psi(X_i,X_{ij};\alpha_j)-\psi(X_i,X_{ik};\alpha_k)-
    \phi(X_{i},X_{k};\alpha_i,\alpha_k)+\phi(X_{i},X_{j};\alpha_i,\alpha_j)=0, \label{eq: corner i} \tag{$E_i$}\\
  &  \psi(X_{ij},X_{i};\alpha_j)-\psi(X_{ij},X_{j};\alpha_i)-
    \phi(X_{ij},X_{ik};\alpha_j,\alpha_k)+\phi(X_{ij},X_{jk};\alpha_i,\alpha_k)=0. \label{eq: corner ij} \tag{$E_{ij}$}
\end{align}
Observe that the corner equations for the vertices $x$ and $x_{123}$ are vacuous, and that, moreover, corner equations \eqref{eq: corner i} and \eqref{eq: corner ij} do not involve the fields $X$ and $X_{123}$.

We recall the relation with the quad-equations: every multi-affine quad-equation from the ABS list,
\[
Q(x,x_i,x_j,x_{ij};\alpha_i,\alpha_j)=0,
\]
admits, after a certain change of variables $x=f(X)$, four equivalent three-leg forms, centered at each of the vertices of an elementary square $\sigma_{ij}$, for instance, the one centered at $x_i$ reads:
\[
Q=0\quad\Leftrightarrow\quad\psi(X_i,X;\alpha_i)-\psi(X_i,X_{ij};\alpha_j)-\phi(X_i,X_j;\alpha_i,\alpha_j)=0.
\]
Now, each of the four-leg corner equations is a sum of two three-leg equations for two adjacent elementary squares, e.g.,
equation \eqref{eq: corner i} is the difference of two three-leg forms centered at $x_i$:
\begin{align}
&\psi(X_i,X;\alpha_i)-\psi(X_i,X_{ij};\alpha_j)-\phi(X_i,X_j;\alpha_i,\alpha_j)=0, \label{eq: aux1}\\
&\psi(X_i,X;\alpha_i)-\psi(X_i,X_{ik};\alpha_k)-\phi(X_i,X_k;\alpha_i,\alpha_k)=0. \label{eq: aux2}
\end{align}

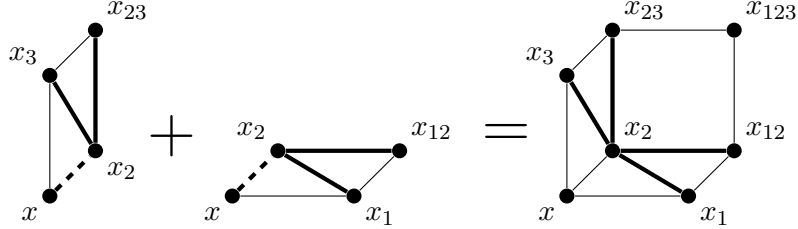
\begin{figure}[htbp]
   \centering
   \begin{tikzpicture}[auto,scale=0.4,inner sep=2]
      \node (x) at (0,0) [circle,fill,label=-135:$x$] {};
      \node (x2) at (1.5,1.5) [circle,fill,label=-45:$x_{2}$] {};
      \node (x3) at (0,4) [circle,fill,label=135:$x_{3}$] {};
      \node (x23) at (1.5,5.5) [circle,fill,label=45:$x_{23}$] {};
      \draw (x23) to (x3) to (x);
      \draw [ultra thick,dashed] (x) to (x2);
      \draw [ultra thick] (x2) to (x23);
      \draw [ultra thick] (x2) to (x3);
      \node (plus) at (4,2) {\Huge$+$};
      \node (x) at (6,0) [circle,fill,label=-135:$x$] {};
      \node (x1) at (10,0) [circle,fill,label=-45:$x_{1}$] {};
      \node (x2) at (7.5,1.5) [circle,fill,label=135:$x_{2}$] {};
      \node (x12) at (11.5,1.5) [circle,fill,label=45:$x_{12}$] {};
      \draw (x) to (x1) to (x12);
      \draw [ultra thick,dashed] (x) to (x2);
      \draw [ultra thick] (x2) to (x12);
      \draw [ultra thick] (x1) to (x2);
      \node (gleich) at (15,2) {\Huge$=$};
      \node (x) at (17,0) [circle,fill,label=-135:$x$] {};
      \node (x1) at (21,0) [circle,fill,label=-45:$x_{1}$] {};
      \node (x2) at (18.5,1.5) [circle,fill,label=45:$x_{2}$] {};
      \node (x3) at (17,4) [circle,fill,label=135:$x_{3}$] {};
      \node (x12) at (22.5,1.5) [circle,fill,label=45:$x_{12}$] {};
      \node (x23) at (18.5,5.5) [circle,fill,label=45:$x_{23}$] {};
      \node (x123) at (22.5,5.5) [circle,fill,label=45:$x_{123}$] {};
      \draw (x) to (x1) to (x12);
      \draw (x23) to (x3) to (x);
      \draw [ultra thick] (x23) to (x2) to (x12);
      \draw (x2) to (x);
      \draw [ultra thick] (x1) to (x2) to (x3);
      \draw (x12) to (x123) to (x23);
   \end{tikzpicture}
   \caption{A corner equation for a discrete three-point 2-form can be written as a sum of the three-leg forms of two adjacent quad-equations}
   \label{fig:four-leg}
\end{figure}
Thus, to every system of corner equations \eqref{eq: corner i}, \eqref{eq: corner ij} there corresponds a system of quad-equations:
\begin{equation}\label{eq:quadsystem}
\begin{aligned}
&Q_{12}:=Q(x,x_{1},x_{2},x_{12};\alpha_{1},\alpha_{2})=0,\qquad&
&\bar{Q}_{12}:=Q(x_{3},x_{13},x_{23},x_{123};\alpha_{1},\alpha_{2})=0,\\
&Q_{23}:=Q(x,x_{2},x_{3},x_{23};\alpha_{2},\alpha_{3})=0,&
&\bar{Q}_{23}:=Q(x_{1},x_{12},x_{13},x_{123};\alpha_{2},\alpha_{3})=0,\\
&Q_{13}:=Q(x,x_{3},x_{1},x_{13};\alpha_{3},\alpha_{1})=0,&
&\bar{Q}_{13}:=Q(x_{2},x_{23},x_{13},x_{123};\alpha_{3},\alpha_{1})=0,
\end{aligned}
\end{equation}
with the following property: every solution of the latter system satisfies the system of corner equations, but not vice versa.

The ABS list of irreducible multi-affine polynomials $Q\in\C[x,x_{1},x_{2},x_{12}]$ depending on two parameters $\alpha_{1}$, $\alpha_{2}$ is given in Appendix~\ref{sec:quadeqs}. It is known that for every system of quad-equations~\eqref{eq:quadsystem} from the ABS list, there exist two multi-affine quad-equations
\[
T(x,x_{12},x_{23},x_{13})=0\qquad\text{and}\qquad \bar{T}(x_{1},x_{2},x_{3},x_{123})=0
\]
satisfied for every solution of \eqref{eq:quadsystem} (the so called \emph{tetrahedron equations}).

\begin{theo}
The system of corner equations \eqref{eq: corner i}, \eqref{eq: corner ij} is equivalent to the following system of six polynomial equations:
\begin{equation}\label{eq:cornersystem}
\begin{aligned}
&E_{1}:=\frac{\partial Q_{12}}{\partial x}Q_{13}-\frac{\partial Q_{13}}{\partial x}Q_{12}=0,\qquad&
&E_{23}:=\frac{\partial\bar{Q}_{13}}{\partial x_{123}}\bar{Q}_{12}-\frac{\partial\bar{Q}_{12}}{\partial x_{123}}\bar{Q}_{13}=0,\\
&E_{2}:=\frac{\partial Q_{23}}{\partial x}Q_{12}-\frac{\partial Q_{12}}{\partial x}Q_{23}=0,&
&E_{13}:=\frac{\partial\bar{Q}_{12}}{\partial x_{123}}\bar{Q}_{23}-\frac{\partial\bar{Q}_{23}}{\partial x_{123}}\bar{Q}_{12}=0,\\
&E_{3}:=\frac{\partial Q_{13}}{\partial x}Q_{23}-\frac{\partial Q_{23}}{\partial x}Q_{13}=0,&
&E_{12}:=\frac{\partial\bar{Q}_{23}}{\partial x_{123}}\bar{Q}_{13}-\frac{\partial\bar{Q}_{13}}{\partial x_{123}}\bar{Q}_{23}=0.
\end{aligned}
\end{equation}
We call polynomials $E_i$, $E_{ij}$ \emph{corner polynomials}.
\end{theo}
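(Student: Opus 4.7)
The plan is to exploit the decomposition of each four-leg corner equation as a difference of two three-leg equations, already recalled in the excerpt and illustrated in Figure~\ref{fig:four-leg}, combined with the multi-affineness of the ABS quad equations.

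I would begin with the ``lower'' corner equations $E_i$, fixing a permutation $(i,j,k)$ of $(1,2,3)$. As noted in the excerpt, \eqref{eq: corner i} is exactly the difference of \eqref{eq: aux1} and \eqref{eq: aux2}, i.e.\ of the three-leg forms centered at $X_i$ of the two bottom-face quad equations $Q_{ij}=0$ and $Q_{ik}=0$. Each three-leg form is equivalent (after the change of variables $x=f(X)$) to the underlying multi-affine quad equation, viewed as an equation for $x$ with the remaining vertices fixed. Consequently, \eqref{eq: corner i} holds if and only if $Q_{ij}=0$ and $Q_{ik}=0$ determine one and the same value of $x$.

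To turn this compatibility into the polynomial of \eqref{eq:cornersystem}, I write the two quads as affine polynomials in $x$,
\[
Q_{ij}=A_{ij}\,x+B_{ij},\qquad Q_{ik}=A_{ik}\,x+B_{ik},\qquad A_{ij}=\frac{\partial Q_{ij}}{\partial x},\quad A_{ik}=\frac{\partial Q_{ik}}{\partial x},
\]
with $A_{ij},B_{ij},A_{ik},B_{ik}$ independent of $x$. Elimination of $x$ by cross-multiplication yields
\[
A_{ij}Q_{ik}-A_{ik}Q_{ij}=A_{ij}B_{ik}-A_{ik}B_{ij},
\]
a polynomial independent of $x$ that vanishes precisely when the two quad equations share an $x$-root. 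Matching the sign conventions, this is exactly $E_i$ from \eqref{eq:cornersystem}. The ``upper'' corner equations $E_{ij}$ are treated by the same strategy with dualised roles: \eqref{eq: corner ij} is the difference of three-leg forms centered at $X_{ij}$ of the two top-face quad equations $\bar{Q}$ that contain $x_{ij}$; these share the vertex $x_{123}$, in which they are multi-affine, and eliminating $x_{123}$ produces the expressions for $E_{12},E_{13},E_{23}$.

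The main technical point I would proceed carefully with is the strict equivalence, rather than merely an implication, between a three-leg form and the associated quad equation; this is a classical fact for the ABS list, but it must be invoked with care when transferring from equations in the $\psi$-variables to polynomial identities in the $x$-variables. One should also note that cross-multiplication reproduces the full compatibility only where the leading coefficients $A_{ij}$ (and their top-face counterparts) do not vanish identically; on the generic open stratum of interest, however, the equivalence with the corner equation is complete.
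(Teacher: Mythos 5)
Your argument coincides with the paper's: the authors likewise observe that the corner equation \eqref{eq: corner i} arises by eliminating $X$ between the two three-leg forms \eqref{eq: aux1}, \eqref{eq: aux2}, and that this is equivalent to eliminating $x$ between the multi-affine forms $Q_{ij}=0$ and $Q_{ik}=0$, which is precisely the cross-multiplication producing $E_i$ (and dually for $E_{ij}$ via $x_{123}$). Your version merely spells out the resultant computation and the non-degeneracy caveat that the paper leaves implicit.
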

\begin{proof}
The derivation of the corner equation \eqref{eq: corner i} from the three-leg equations \eqref{eq: aux1}, \eqref{eq: aux2} can be described as elimination of the variable $X$ between the latter two equations. Obviously, one obtains an equivalent equation by eliminating the variable $x$ between the multi-affine forms $Q_{ij}=0$ and $Q_{ik}=0$ of these same equations.
\end{proof}

We remark that each polynomial $E_\ell$ ($\ell\in\{1,2,3,12,23,13\}$) is of degree 2 with respect to the variable $x_\ell$, does not depend on the ``opposite'' variable, and is of degree one with respect to all other variables.

From now on, we will refer to \eqref{eq:cornersystem} as to the system of corner equations corresponding to quad-equations \eqref{eq:quadsystem}, or simply as to the system of corner equations. Our main interest is in the following crucial property of such systems.

\begin{defi}\label{def:consistency}
A system of corner equations is called \emph{consistent}, if it has the minimal possible rank 2, i.e., if exactly two of these equations are independent.
\end{defi}
In \cite{variational}, we already gave two different proofs of consistency of the systems of corner equations coming from integrable quad-equations. One of these proofs directly utilized 3D consistency of quad-equations. In the present paper, we provide additional insights in the algebraic structure of the systems of corner equations, which lead to new insights into the nature of their consistency, as well.

We start by establishing certain remarkable relations between corner equations.
\begin{prop}\label{th:fractionalideal}
For a system of corner polynomials from \eqref{eq:cornersystem} the following relations hold:
\begin{align}
\label{eq:fractional1}
\frac{\partial E_{3}}{\partial x_{23}}E_{2}-\frac{\partial E_{2}}{\partial x_{23}}E_{3}&=Q_{23}^{2,3} E_{1},\\ \label{eq:fractional2}
\frac{\partial E_{3}}{\partial x_{1}}E_{2}-\frac{\partial E_{2}}{\partial x_{1}}E_{3}&=Q_{23}^{2,3} E_{23},\\
\intertext{and} \label{eq:fractional3}
\frac{\partial E_{13}}{\partial x_{1}}E_{12}-\frac{\partial E_{12}}{\partial x_{1}}E_{13}&=\bar{Q}_{23}^{12,13} E_{23},\\ \label{eq:fractional4}
\frac{\partial E_{13}}{\partial x_{23}}E_{12}-\frac{\partial E_{12}}{\partial x_{23}}E_{13}&=\bar{Q}_{23}^{12,13} E_{1},
\end{align}
where $Q_{23}^{2,3}$ and $\bar{Q}_{23}^{12,13}$ are the biquadratic polynomials
\begin{align*}
Q_{23}^{2,3}&=Q_{23}\frac{\partial^{2}Q_{23}}{\partial x\partial x_{23}}-\frac{\partial Q_{23}}{\partial x}\frac{\partial Q_{23}}{\partial x_{23}},\\\intertext{and}
\bar{Q}_{23}^{12,13}&=\bar{Q}_{23}\frac{\partial^{2}\bar{Q}_{23}}{\partial x_{1}\partial x_{123}}-\frac{\partial \bar{Q}_{23}}{\partial x_{1}}\frac{\partial \bar{Q}_{23}}{\partial x_{123}}
\end{align*}
of the quad-equations $Q_{23}=0$ and $\bar{Q}_{23}=0$, respectively (superscripts indicate variables on which these biquadratic polynomials depend). Due to the symmetry of the system of quad-equations~\eqref{eq:quadsystem}, equations obtained from \eqref{eq:fractional1}--\eqref{eq:fractional4} by cyclic permutations of indices $(123)$ hold true, as well.
\end{prop}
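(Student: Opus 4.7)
My plan is to prove the first identity \eqref{eq:fractional1} by a direct computation in the polynomial ring, and then to derive the remaining three by dualizing between the bottom and top faces of the cube and by invoking 3D consistency of the quad system.

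For \eqref{eq:fractional1}, I abbreviate $a=Q_{12}$, $b=Q_{13}$, $c=Q_{23}$, use a prime for $\partial/\partial x$ and the subscript $_{23}$ for $\partial/\partial x_{23}$, so that $E_1=a'b-b'a$, $E_2=c'a-a'c$, $E_3=b'c-c'b$. Since, among $a,b,c$, only $c$ depends on $x_{23}$, one has $\partial E_2/\partial x_{23}=c'_{23}a-a'c_{23}$ and $\partial E_3/\partial x_{23}=b'c_{23}-c'_{23}b$. The expression $\frac{\partial E_3}{\partial x_{23}}E_2-\frac{\partial E_2}{\partial x_{23}}E_3$ then expands into eight monomials; the two proportional to $a'b'cc_{23}$ cancel against each other, and so do the two proportional to $abc'c'_{23}$. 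The surviving four combine as $(cc'_{23}-c'c_{23})(a'b-b'a)=Q_{23}^{2,3}\cdot E_1$, which is the claim. The only nontrivial step is tracking the eight monomials and observing these two elementary cancellations.

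Identity \eqref{eq:fractional4} follows from the same computation applied to the top face: replace $Q_{ij}$ by $\bar Q_{ij}$, $x$ by $x_{123}$, and $x_{23}$ by $x_{1}$ (the variable ``opposite'' to $x_{23}$). The algebra is structurally identical and produces $\bar Q_{23}^{12,13}\cdot E_1$ on the right-hand side.

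The harder pair \eqref{eq:fractional2} and \eqref{eq:fractional3} mixes bottom and top corner polynomials, so they cannot be literal polynomial identities in the free ring: e.g.\ the LHS of \eqref{eq:fractional2} depends on $x_{12}$ while its RHS does not. I expect these to be identities modulo the ideal generated by the six quad equations in \eqref{eq:quadsystem}, with 3D consistency of the ABS system---equivalently, the existence of the tetrahedron equation---providing the identification of $E_1$ with $E_{23}$ (and vice versa) up to the appropriate biquadratic factor. Establishing this identification cleanly is the main obstacle of the proof; once it is in place, the algebra parallels the bookkeeping carried out for \eqref{eq:fractional1}, and the cyclic symmetry $(1,2,3)\mapsto(2,3,1)\mapsto(3,1,2)$ of the quad system then delivers all the remaining relations claimed at the end of the proposition.
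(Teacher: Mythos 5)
Your computation for \eqref{eq:fractional1} is correct and coincides with the paper's argument (a direct expansion in which two pairs of monomials cancel and the remaining four factor as $Q_{23}^{2,3}E_1$). But the rest of the proposal has a genuine gap, compounded by a mislabelling of which identities are ``easy''. The top-face mirror of \eqref{eq:fractional1} under $Q_{ij}\mapsto\bar Q_{ij}$, $x\mapsto x_{123}$, $x_{23}\mapsto x_1$ is \eqref{eq:fractional3}, not \eqref{eq:fractional4}: among $\bar Q_{12},\bar Q_{13},\bar Q_{23}$ only $\bar Q_{23}$ depends on $x_1$, so your eight-monomial cancellation applies to $\partial E_{13}/\partial x_1\cdot E_{12}-\partial E_{12}/\partial x_1\cdot E_{13}$ and yields $\bar Q_{23}^{12,13}E_{23}$ --- a statement entirely within the top face. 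A purely top-face computation cannot produce the bottom-face polynomial $E_1$, so \eqref{eq:fractional4} is not ``structurally identical'' to \eqref{eq:fractional1}; it is the hard partner of \eqref{eq:fractional2}, which is exactly how the paper pairs them.

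For the hard pair \eqref{eq:fractional2}, \eqref{eq:fractional4} your plan does not go through. First, your stated obstruction is false: the right-hand side of \eqref{eq:fractional2} does depend on $x_{12}$, since $E_{23}$ depends on all variables except its opposite $x_1$ (the third argument of $\bar Q_{13}$ in \eqref{eq:quadsystem} should read $x_{12}$, as the remark after the theorem and the four-leg form \eqref{eq: corner ij} make clear); meanwhile the left-hand side is automatically independent of $x_1$, being a Wronskian in $x_1$ of two polynomials affine in $x_1$. These are literal polynomial identities, not identities modulo the ideal of the quad-equations, and the weaker ``modulo'' version would not support the fractional-ideal conclusions drawn from the proposition. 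Second, ``invoking 3D consistency to identify $E_1$ with $E_{23}$'' is precisely the step you leave open, and it is the entire content of the proof: the paper rewrites $E_3=\frac{\partial\bar T}{\partial x_{123}}\bar Q_{12}-\frac{\partial\bar Q_{12}}{\partial x_{123}}\bar T$ and $E_2=\frac{\partial\bar Q_{13}}{\partial x_{123}}\tilde T-\frac{\partial\tilde T}{\partial x_{123}}\bar Q_{13}$ using the tetrahedron polynomial, obtains $\gamma\bar T^{2,3}E_{23}$ on the left of \eqref{eq:fractional2}, and then must still prove $\gamma\bar T^{2,3}=Q_{23}^{2,3}$ via biquadratics, Proposition 5 of \cite{ABS1}, and a case-by-case sign check. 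None of this is supplied or replaced by an alternative argument, so half of the proposition remains unproved.
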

\begin{proof}
The proof of \eqref{eq:fractional1} is obtained by a direct computation with expressions for $E_2$, $E_3$ given in \eqref{eq:cornersystem}.
The proof of \eqref{eq:fractional2} is substantially more involved. One starts with the observation that the four-leg equation $(E_3)$, as given in \eqref{eq: corner i} for $i=3$, can be alternatively obtained as a sum of the three-leg forms (centered at $x_3$) of the quad-equation $\bar{Q}_{12}(x_3,x_{13},x_{23},x_{123};\alpha_1,\alpha_2)=0$ and the tetrahedron equation $\bar{T}(x_1,x_2,x_3,x_{123})=0$, respectively:
\begin{align*}
& \psi(X_3,X_{13};\alpha_1)-\psi(X_3,X_{23};\alpha_2)-\phi(X_3,X_{123};\alpha_1,\alpha_2)=0,\\
& \phi(X_3,X_1;\alpha_3,\alpha_1)+\phi(X_3,X_2;\alpha_2,\alpha_3)+\phi(X_3,X_{123};\alpha_1,\alpha_2)=0.
\end{align*}
Therefore, the polynomial equation $E_3=0$ can be alternatively obtained by eliminating the variable $x_{123}$ between the polynomial equations $\bar{Q}_{12}=0$ and $\bar{T}=0$. Choosing a suitable normalization of the polynomial $\bar{T}$ (which is defined up to a constant factor), we can assume that
\begin{equation}\label{eq: proof1 aux1}
E_{3}=\frac{\partial \bar{T}}{\partial x_{123}}\bar{Q}_{12}-\frac{\partial\bar{Q}_{12}}{\partial x_{123}}\bar{T}.
\end{equation}
Analogously,
\begin{equation}\label{eq: proof1 aux2}
E_{2}=\frac{\partial\bar{Q}_{13}}{\partial x_{123}}\tilde{T}-
\frac{\partial \tilde{T}}{\partial x_{123}}\bar{Q}_{13},
\end{equation}
with $\widetilde{T}=\gamma\bar{T}$ being some other normalization of the same tetrahedron equation. Substituting expressions \eqref{eq: proof1 aux1}, \eqref{eq: proof1 aux2} into the left-hand side of \eqref{eq:fractional2}, one arrives after a straightforward computation, using that all polynomials $\bar{Q}_{12}$, $\bar{Q}_{13}$, $\bar{T}$ are multi-affine:
\begin{align*}
&\frac{\partial E_{3}}{\partial x_{1}}E_{2}-\frac{\partial E_{2}}{\partial x_{1}}E_{3}=
\gamma\bar{T}^{2,3}E_{23},\\
\intertext{with the biquadratic polynomial}
&\bar{T}^{2,3}=\bar{T}\frac{\partial^{2}\bar{T}}{\partial x_{1}\partial x_{123}}-
\frac{\partial\bar{T}}{\partial x_{1}}\frac{\partial{\bar{T}}}{\partial x_{123}}.
\end{align*}
It remains to prove that $\gamma\bar{T}^{2,3}=Q_{23}^{2,3}$. A straightforward computation with the expression \eqref{eq: proof1 aux1} for $E_3$ gives:
\[
\frac{\partial\bar{Q}_{12}}{\partial x_{23}}E_{3}-\frac{\partial E_{3}}{\partial x_{23}}\bar{Q}_{12}
=\bar{Q}_{12}^{3,13}\bar{T},
\]
which can be considered as a formula for the multi-affine polynomial $\bar{T}$. Computing its biquadratic $\bar{T}^{2,3}$ by a standard Wronskian-type operation eliminating the variables $x_{1}$, $x_{123}$ (on which the polynomial $\bar{Q}_{12}^{3,13}$ does not depend), we arrive at
\[
\bar{T}^{2,3}=\bar{\beta}^{-1} Q_{23}^{2,3},\qquad\text{where}\qquad \bar{\beta}:=\frac{\bar{Q}_{12}^{3,13}}{Q_{13}^{3,13}}.
\]
Similarly, a computation with the expression \eqref{eq: proof1 aux2} for $E_2$ gives:
\[
\frac{\partial E_{2}}{\partial x_{23}}\bar{Q}_{13}-\frac{\partial \bar{Q}_{13}}{\partial x_{23}}E_{2}=
\bar{Q}_{13}^{2,12}\tilde{T},
\]
and then we find an expression for the biquadratic $\tilde{T}^{2,3}$ of the multi-affine polynomial $\tilde{T}$:
\[
\tilde{T}^{2,3}=\tilde{\beta} Q_{23}^{2,3}\qquad\text{where}\qquad
 \tilde{\beta}:=\frac{Q_{12}^{2,12}}{\bar{Q}_{13}^{2,12}}.
\]
But, according to \cite[Proposition 5]{ABS1}, we have:
\[
\frac{Q_{12}^{2,12}}{\bar{Q}_{13}^{2,12}}=\frac{\bar{Q}_{12}^{3,13}}{Q_{13}^{3,13}},\quad\text{i.e.,}\quad \bar{\beta}=\tilde{\beta}=:\beta.
\]
Therefore, $\tilde{T}^{2,3}=\beta^{2}\bar{T}^{2,3}$, so that $\tilde{T}=\beta\bar{T}$ or $\tilde{T}=-\beta\bar{T}$. A case-by-case inspection of the ABS list shows that $\tilde{T}=\beta\bar{T}$, so $\beta=\gamma$, and therefore $Q_{23}^{2,3}=\gamma\bar{T}^{2,3}$, which finishes the proof of equation \eqref{eq:fractional2}.
The proofs of equations~\eqref{eq:fractional3} and \eqref{eq:fractional4} are analogous to the proofs of equations~\eqref{eq:fractional1} and \eqref{eq:fractional2}, respectively.
\end{proof}

We now recall the definition of fractional ideals, cf. \cite[section VII.1]{bourbaki}, adapted to our situation:
\begin{defi}[Fractional ideal]\label{def:fractionalideal}
Set $\mathcal{R}:=\C[x_{1},x_{2},x_{3},x_{12},x_{23},x_{13}]$ and $r\in\mathcal{R}\setminus\{0\}$. Then a \emph{fractional ideal $\mathcal{I}$ with the denominator $r$} is an $\mathcal{R}$-submodule of the field of fractions of $\mathcal{R}$ such that $r\mathcal{I}\subset\mathcal{R}$.
For three polynomials $p_{1},p_{2},r\in\mathcal{R}\setminus\{0\}$, the \emph{fractional ideal generated by $p_{1}$ and $p_{2}$ with the denominator $r$} is denoted by $\langle p_{1},p_{2}\rangle_{r}$ and consists of all polynomials $p$ representable as $rp=r_{1}p_{1}+r_{2}p_{2}$ with some $r_{1},r_{2}\in\mathcal{R}$. We say that the polynomial $p\in\mathcal{R}$ is in a fractional ideal generated by two polynomials $p_{1}, p_{2}\in\mathcal{R}$ if $p\in\langle p_{1},p_{2}\rangle_{r}$ for some $r\in\mathcal{R}$.
\end{defi}

With the help of this definition, we can re-phrase Proposition~\ref{th:fractionalideal} by saying that $E_2$, $E_3$, $E_{23}$ and $E_{13}$ are in a fractional ideal generated by $E_1$ and $E_{23}$. This easily yields the following more general statement which can be considered as a description of consistency of corner equations in terms of fractional ideals:
\begin{theo}\label{th:fractionalideal1}
Any two of the corner polynomials from the system~\eqref{eq:cornersystem} generate a fractional ideal which contains the remaining four corner polynomials. Therefore, the rank of the system~\eqref{eq:cornersystem} is 2.
\end{theo}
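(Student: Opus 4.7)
The plan is to bootstrap from Proposition~\ref{th:fractionalideal} and its cyclic images under $(123)$, handling three kinds of pairs in sequence. The key mechanism is transitivity of the fractional-ideal relation: if $p\in\langle q,r\rangle_s$ and both $q,r\in\langle u,v\rangle_t$, then $p\in\langle u,v\rangle_{st}$, by substitution and collection of denominators.

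For a ``pure'' pair such as $(E_2,E_3)$, identities \eqref{eq:fractional1}, \eqref{eq:fractional2} directly put $E_1$ and $E_{23}$ into $\langle E_2,E_3\rangle_{Q_{23}^{2,3}}$. To place $E_{13}$ there, the cyclic permutation of \eqref{eq:fractional2} yields $E_{13}\in\langle E_1,E_3\rangle$; combining with $E_1\in\langle E_2,E_3\rangle$ and the obvious $E_3\in\langle E_2,E_3\rangle$ by transitivity gives $E_{13}\in\langle E_2,E_3\rangle$. The polynomial $E_{12}$ is treated analogously. Cyclic and dual symmetry extend the conclusion to the remaining pure pairs $(E_1,E_2)$, $(E_1,E_3)$, $(E_{12},E_{13})$, $(E_{12},E_{23})$, $(E_{13},E_{23})$.

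For the opposite pair $(E_1,E_{23})$, I read \eqref{eq:fractional1}--\eqref{eq:fractional2} as a $2\times 2$ linear system over $\mathcal{R}$ in the unknowns $(E_2,E_3)$ with right-hand sides $Q_{23}^{2,3}E_1$ and $Q_{23}^{2,3}E_{23}$. Cramer's rule then expresses $E_2$ and $E_3$ as $\mathcal{R}$-combinations of $E_1,E_{23}$ divided by the $2\times 2$ determinant of partial derivatives. An analogous treatment of \eqref{eq:fractional3}--\eqref{eq:fractional4} places $E_{12}$ and $E_{13}$ in $\langle E_1,E_{23}\rangle$; cyclic symmetry handles the remaining opposite pairs. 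For a ``mixed'' pair such as $(E_1,E_{12})$, the cyclic version of \eqref{eq:fractional2} exhibits $E_{12}$ as an $\mathcal{R}$-combination of $E_1$ and $E_2$ with coefficient $-\partial E_1/\partial x_3$ on $E_2$, which is a non-zero polynomial; solving for $E_2$ gives $E_2\in\langle E_1,E_{12}\rangle$, and combining with the pure-pair result for $(E_1,E_2)$ via transitivity brings all six corner polynomials into $\langle E_1,E_{12}\rangle$.

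The rank assertion then follows: the fractional-ideal containment forces, outside the proper subvariety where the accumulated denominators vanish, the common zero set of the six corner polynomials to coincide with that of any two of them, so the rank of the system is at most $2$. It is exactly $2$ because each $E_\ell$ is independent of its opposite variable while the opposite corner polynomial depends on that variable, so no single corner polynomial can generate the others. The main technical obstacle throughout is verifying non-vanishing of the various $2\times 2$ determinants entering the Cramer inversions (and of the swap coefficients in the mixed-pair step); this is a direct computation using the explicit form \eqref{eq:cornersystem} of the corner polynomials together with the multi-affine structure of the underlying quad-equations.
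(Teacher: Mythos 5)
Your proposal is correct and follows essentially the same route as the paper: the paper simply re-phrases Proposition~\ref{th:fractionalideal} (via the Cramer-type inversion you describe) as saying that the four remaining corner polynomials lie in the fractional ideal generated by the opposite pair $E_1$, $E_{23}$, and asserts that the general statement ``easily'' follows; your case analysis into pure, opposite and mixed pairs, together with the transitivity of fractional-ideal membership and the cyclic/dual symmetries, is precisely the omitted bookkeeping. Your explicit flagging of the non-vanishing of the $2\times2$ determinants and swap coefficients is a point the paper glosses over entirely, so if anything your write-up is the more careful one.
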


\section{Consistency of corner equations and octahedron relations}\label{sec:octs}

In view of Theorem~\ref{th:fractionalideal1}, it is natural to inquire whether one can find two relations which would be either simpler or more symmetric than the corner equations themselves but which would still contain the whole information of the system~\eqref{eq:cornersystem}, for instance in the sense that all corner equations are in a fractional ideal generated by those two relations. We will see that one can find such two relations in the multi-affine class.
\begin{defi}[Octahedron relation]\label{def:ocathedron}
A relation $R=0$, where $R\in\C[x_{1},x_{2},x_{3},x_{12},x_{23},x_{13}]$ is a multi-affine polynomial, is an \emph{octahedron relation} of a system of corner equations~\eqref{eq:cornersystem} if $R$ belongs to a fractional ideal generated by (any two of) the corner polynomials $E_i$, $E_{ij}$.
\end{defi}
Immediately from this definition there follows:
\begin{theo}\label{th:octs}
Suppose the system of corner equations~\eqref{eq:cornersystem} admits two octahedron relations $R_{1}=0$ and $R_{2}=0$, where $R_1$, $R_2$ are two linearly independent irreducible polynomials. Then all corner polynomials are in a fractional ideal generated by $R_{1}$ and $R_{2}$, which is equivalent to the consistency of the system~\eqref{eq:cornersystem}.
\end{theo}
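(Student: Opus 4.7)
The plan is to fix any two corner polynomials, say $E_i$ and $E_{ij}$, use the defining property of octahedron relations to obtain common-denominator expansions
\[
sR_1 = a_1 E_i + b_1 E_{ij},\qquad sR_2 = a_2 E_i + b_2 E_{ij},
\]
and then invert this $2\times 2$ linear system by Cramer's rule. This immediately yields
\[
D\,E_i = s(b_2 R_1 - b_1 R_2),\qquad D\,E_{ij} = s(a_1 R_2 - a_2 R_1),\qquad D:=a_1b_2-a_2b_1,
\]
so that $E_i, E_{ij}\in \langle R_1,R_2\rangle_D$ as soon as $D\not\equiv 0$. By Theorem~\ref{th:fractionalideal1} each of the remaining four corner polynomials already lies in a fractional ideal of $E_i, E_{ij}$ with some biquadratic denominator; multiplying these denominators into $D$ will place all six corner polynomials into a single fractional ideal generated by $R_1$ and $R_2$, giving the claimed membership. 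The equivalence with consistency of \eqref{eq:cornersystem} then follows by the obvious rank count, since $R_1$ and $R_2$ contribute at most two independent conditions, while Theorem~\ref{th:fractionalideal1} guarantees the rank is at least two.

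The main obstacle is the non-degeneracy condition $D\not\equiv 0$, and this is precisely where the hypotheses of irreducibility and $\C$-linear independence of $R_1, R_2$ are used. If $D$ were identically zero, the two expansions would yield $a_2 R_1 = a_1 R_2$ and $b_2 R_1 = b_1 R_2$ in the UFD $\mathcal{R}$; since the distinct irreducibles $R_1, R_2$ are then coprime, this would force $R_k \mid a_k$ and $R_k \mid b_k$ with common cofactors $c, c'$, collapsing both identities to the tautology $sR_k = R_k(c E_i + c' E_{ij})$. Such a choice of $(a_k, b_k, s)$ would record no genuine membership of $R_k$ in $\langle E_i, E_{ij}\rangle$, and we may therefore replace it by an honest representation supplied by the octahedron-relation hypothesis, obtaining at last a pair with $D\neq 0$. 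Making this last step fully rigorous -- by exhibiting one non-degenerate pair of representations using the specific multi-affine structure of $R_1, R_2$ -- is the delicate point; the rest of the argument is a routine application of Cramer's rule combined with Theorem~\ref{th:fractionalideal1}.
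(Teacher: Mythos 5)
Your route is genuinely different from the paper's: the paper treats the theorem as immediate from the definition and, in the remark that follows it, makes this concrete by forming the Wronskian-type combinations $B_i=\frac{\partial R_2}{\partial x_{jk}}R_1-\frac{\partial R_1}{\partial x_{jk}}R_2$, which are independent of $x_{jk}$ by multi-affineness, and then arguing that $E_i$ divides $B_i$, so that $E_i\in\langle R_1,R_2\rangle_{c_i}$ with $c_i$ the cofactor. Your Cramer's-rule inversion is algebraically correct as far as it goes: the identities $D\,E_i=s(b_2R_1-b_1R_2)$ and $D\,E_{ij}=s(a_1R_2-a_2R_1)$ are right, the composition with Theorem~\ref{th:fractionalideal1} to capture the remaining four corner polynomials is legitimate, and your analysis of the degenerate case is accurate (linear independence plus irreducibility makes $R_1,R_2$ non-associate, hence coprime in the UFD, and $D\equiv 0$ then forces $a_k=cR_k$, $b_k=c'R_k$, $s=cE_i+c'E_{ij}$).

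The step you flag as delicate is, however, a genuine gap, and it cannot be closed the way you propose. Definition~\ref{def:fractionalideal} is purely existential, and the membership it asks for is satisfied \emph{trivially} by every polynomial: $E_i\cdot R=R\cdot E_i+0\cdot E_{ij}$ exhibits any $R$ as an element of $\langle E_i,E_{ij}\rangle_{E_i}$. So ``the octahedron-relation hypothesis'' does not supply an ``honest'' representation with $D\not\equiv 0$; a priori every representation available to you may be exactly the degenerate kind your own analysis describes, in which case your argument terminates without conclusion. To produce a non-degenerate pair one must inject information about $R_1,R_2$ beyond the bare definition --- their multi-affineness and the existence of representations whose denominators (such as $Q_{23}^{2,3}$ in Theorem~\ref{th:oct1} or $\partial R_1/\partial x_{23}$ in Theorem~\ref{th:oct2}) are independent of the eliminated variable and do not themselves lie in the ideal of the corner polynomials. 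That variable-dependence and degree bookkeeping is precisely what powers the paper's remark (both $B_1$ and $E_1$ are free of $x_{23}$, so the $E_{23}$-coefficient can be discarded and divisibility follows); without it, neither your non-degeneracy claim nor, for that matter, the literal statement of the theorem has any content. You should either build those explicit representations into your proof or reformulate the degenerate-case escape as an additional hypothesis.
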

\begin{rem}
Actually, one can give a more precise formulation: the corner polynomials $E_i$, $E_{jk}$ divide the polynomials
\[
B_i:=\frac{\partial R_2}{\partial x_{jk}}R_1-\frac{\partial R_1}{\partial x_{jk}}R_2, \qquad \text{resp.}\qquad
B_{jk}:=\frac{\partial R_2}{\partial x_i}R_1-\frac{\partial R_1}{\partial x_i}R_2.
\]
To show this for $E_{1}=0$, say (since for all other corner polynomials the proof is analogous), we observe that the polynomial $B_1$ is a non-zero polynomial, independent of $x_{23}$ and of degree 2 with respect to all other variables $x_{1}$, $x_{2}$, $x_{3}$, $x_{12}$ and $x_{13}$. By definition, $B_1$ is in a fractional ideal generated by $E_1$ and $E_{23}$. The coefficient by $E_{23}$ in the corresponding representation $rB=p_1E_1+p_{23}E_{23}$ must vanish, since both $B_1$ and $E_1$ do not depend on $x_{23}$. Therefore, $B_1=0$ as soon as $E_1=0$, hence $B_1$ is divisible by $E_1$.
\end{rem}

One of the main results of this paper is the following.
\begin{theo}[Consistency of corner equations]\label{th:consistency1}
A system of corner equations~\eqref{eq:cornersystem} coming from any of the ABS quad-equations admits two independent octahedron relations.
\end{theo}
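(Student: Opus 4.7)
My plan is, for each quad-equation in the ABS list, to exhibit two explicit multi-affine polynomials $R_1, R_2$ in the six octahedron variables and then verify the hypotheses of Theorem~\ref{th:octs}. A natural source of multi-affine relations is the pair of tetrahedron polynomials $T(x,x_{12},x_{23},x_{13})$ and $\bar T(x_1,x_2,x_3,x_{123})$ attached to the 3D-consistent ABS system; the difficulty is that these still involve the ``polar'' variables $x$ and $x_{123}$, which must be eliminated without breaking multi-affineness in the remaining six variables.

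To construct $R_1$, I would eliminate $x$ between $T=0$ and the three quad-equations $Q_{12}=Q_{13}=Q_{23}=0$. Since all four polynomials are linear in $x$, there exists a polynomial combination
\[
R_1 := \mu\,T + \mu_{12}\,Q_{12} + \mu_{13}\,Q_{13} + \mu_{23}\,Q_{23}
\]
whose $x$-derivative vanishes identically. A structural hint for why multi-affineness in the six octahedron variables can also be preserved comes from the three-leg forms: summing the three-leg forms of $Q_{12}, Q_{23}, Q_{13}$ centred at $x$ reproduces the three-leg form of $T$ at $x$, which suggests compatible coefficients $\mu, \mu_{ij}$ and hence an explicit template for $R_1$. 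The companion polynomial $R_2$ is obtained analogously, with $x$ replaced by $x_{123}$, $T$ by $\bar T$, and $Q_{ij}$ by $\bar Q_{ij}$.

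Given candidates $R_1, R_2$, the remaining checks are as follows. Multi-affineness and irreducibility are verified by inspection of the explicit formulas. Fractional-ideal membership is obtained by observing that each $R_i = 0$ is an algebraic consequence of the quad-equations, hence of the corner equations, so that $R_i$ vanishes on the variety cut out by any two corner polynomials; combined with Theorem~\ref{th:fractionalideal1} and the degree bounds on $R_i$, this yields a representation $r R_i = p\,E_1 + q\,E_{23}$ for suitable polynomials $p, q, r$, i.e.\ membership in the fractional ideal. Linear independence of $R_1$ and $R_2$ can be read off from a single leading coefficient, since they originate from opposite poles of the cube. The main obstacle is the first step: while cancellation of $x$ is forced by a single linear relation on $x$-derivatives, multi-affineness of $R_1$ in the remaining variables is a more delicate requirement, and a rigorous proof for every entry of the ABS list appears to require a case-by-case inspection exploiting the explicit form of $\psi$ and $\phi$ in each case.
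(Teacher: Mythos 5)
Your candidate for the first relation is essentially the correct object: Proposition~\ref{th:summation} shows that the (normalized) symmetric combination $\tfrac{1}{2f_1}\left(\partial E_1/\partial x_1+\partial E_{23}/\partial x_{23}\right)$ equals $Q_{12}+Q_{23}+Q_{13}+T$ in the cases coming from $Q_1^{\delta}$, $Q_2$, $Q_3^{\delta}$, $A_1^{\delta}$ (and $Q_{12}+Q_{23}+Q_{13}$ in the $H$ cases), which is exactly the combination you propose. The fatal gap is your second relation: the remark following Proposition~\ref{th:summation} records that the \emph{same} polynomial $\Omega_1$ also equals $\bar{Q}_{12}+\bar{Q}_{23}+\bar{Q}_{13}+\bar{T}$. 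The bottom-face and top-face eliminations collapse to one and the same octahedron relation, so your $R_1$ and $R_2$ are proportional and do not give two independent relations. This collapse is precisely the difficulty of the theorem; the paper's second relation comes from a different mechanism, namely $P_1:=E_1-x_1R_1$ (Theorem~\ref{th:oct2}), which is multi-affine because the quadratic terms in $x_1$ cancel, and whose fractional-ideal membership rests on the nontrivial discriminant identity of Lemma~\ref{lem:discr}. Only for $A_2$ and $Q_4$ do the cyclic variants $R_1,R_2,R_3$ of the first relation already furnish two independent ones (Proposition~\ref{obs:symmetry}).

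A second, independent gap is your argument for fractional-ideal membership. Being an algebraic consequence of the quad-equations~\eqref{eq:quadsystem} does not imply being a consequence of the corner equations: as noted after~\eqref{eq:quadsystem}, every solution of the quad-equations solves the corner equations but \emph{not vice versa}, so the corner variety is strictly larger than the quad variety, and a polynomial vanishing on the latter need not vanish on the former. Moreover, even granted vanishing on the right variety, passing to an explicit representation $rR=pE_1+qE_{23}$ with a controlled denominator requires more than a degree count. The paper sidesteps both issues by exhibiting the membership identity directly, e.g.
\[
\frac{\partial^2 E_3}{\partial x_1\partial x_{23}}E_2-\frac{\partial^2 E_2}{\partial x_1\partial x_{23}}E_3=Q_{23}^{2,3}R_1,
\]
which follows from the relations \eqref{eq:fractional1}--\eqref{eq:fractional2} of Proposition~\ref{th:fractionalideal}. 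To repair your proof you would need both a genuinely new source for the second relation and a direct (identity-based) verification of ideal membership.
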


We will prove this by an explicit construction of the octahedron relations.

\section{First octahedron relation}\label{sect: 1st oct}

The first octahedron relation can be found in the following way:
\begin{theo}\label{th:oct1}
Equation $R_{1}=0$ with
\begin{equation*}
R_{1}:=\frac{1}{2}\left(\frac{\partial E_{1}}{\partial x_{1}}+\frac{\partial E_{23}}{\partial x_{23}}\right)
\end{equation*}
is an octahedron relation of the system~\eqref{eq:cornersystem}. Analogous octahedron relations are obtained through cyclic permutations of indices $(123)$:
\begin{equation*}
R_{2}:=\frac{1}{2}\left(\frac{\partial E_{2}}{\partial x_{2}}+\frac{\partial E_{13}}{\partial x_{13}}\right),\qquad
R_{3}:=\frac{1}{2}\left(\frac{\partial E_{3}}{\partial x_{3}}+\frac{\partial E_{12}}{\partial x_{12}}\right).
\end{equation*}
\end{theo}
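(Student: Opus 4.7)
The plan is to verify the two conditions of Definition~\ref{def:ocathedron}: multi-affinity of $R_1$, and membership of $R_1$ in the fractional ideal generated by two of the corner polynomials, for which $E_1$ and $E_{23}$ are the natural choice.

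The multi-affinity is the easy half. As recorded in the discussion following~\eqref{eq:cornersystem}, each corner polynomial $E_\ell$ is of degree $2$ in the variable $x_\ell$, of degree $1$ in each other variable it involves, and independent of the ``opposite'' variable. In particular $E_1$ is independent of $x_{23}$, so $\partial E_1/\partial x_1$ is multi-affine in $x_1,x_2,x_3,x_{12},x_{13}$ and still independent of $x_{23}$; symmetrically, $\partial E_{23}/\partial x_{23}$ is multi-affine in $x_2,x_3,x_{12},x_{13},x_{23}$ and independent of $x_1$. Their sum $R_1$ is therefore multi-affine in all six octahedron variables.

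For the fractional-ideal membership, the plan is to compute directly with the representations in~\eqref{eq:cornersystem}. Differentiating $E_1$ with respect to $x_1$ splits $\partial E_1/\partial x_1$ into a remainder whose terms visibly carry a factor $Q_{12}$ or $Q_{13}$, plus a Jacobian $J:=(\partial_x Q_{12})(\partial_{x_1}Q_{13})-(\partial_x Q_{13})(\partial_{x_1}Q_{12})$. A symmetric split applies to $\partial E_{23}/\partial x_{23}$, producing a Jacobian $\bar J$ built from $\bar Q_{12},\bar Q_{13}$ with respect to the pair $(x_{23},x_{123})$. The terms visibly carrying a $Q$- or $\bar Q$-factor can be absorbed into $\langle E_1,E_{23}\rangle_r$ by exploiting the resultant structure of~\eqref{eq:cornersystem} together with Proposition~\ref{th:fractionalideal}.

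The crux is then to show that $J+\bar J$ itself lies in the fractional ideal. The guiding geometric fact is that every point of $V(E_1,E_{23})$ lifts, by 3D consistency of the underlying ABS quad-system, to a point of $\C^8$ on which all six quad-equations of~\eqref{eq:quadsystem} hold; on that lift, the equations $Q_{12}=0$ and $Q_{13}=0$ determine a common value of $x$ as a function of $x_1$, so the implicit-function identity forces $J=0$, and analogously $\bar J=0$. Promoting this geometric vanishing to an explicit algebraic identity $rR_1=r_1E_1+r_2E_{23}$ is the principal technical obstacle I anticipate; I would handle it following the template of the proof of Proposition~\ref{th:fractionalideal}, expressing $J$ and $\bar J$ as Wronskian-type combinations involving biquadratic polynomials of the quad-equations and invoking the ABS ratio identity \cite[Proposition~5]{ABS1} to match denominators, with a final case-by-case inspection of the ABS list to fix an overall sign.
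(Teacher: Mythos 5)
Your treatment of multi-affinity is correct and coincides with the paper's. The ideal-membership half, however, breaks down at precisely the point you identify as the crux. The Jacobian $J=(\partial_x Q_{12})(\partial_{x_1}Q_{13})-(\partial_x Q_{13})(\partial_{x_1}Q_{12})$ does \emph{not} vanish on lifted solutions of the quad-system. Indeed, on the common zero set of $Q_{12}$ and $Q_{13}$ the remainder in your split of $\partial E_1/\partial x_1$ vanishes (each of its terms carries a factor $Q_{12}$ or $Q_{13}$), so there $J$ coincides with $\partial E_1/\partial x_1$ itself; if $J$ vanished on solutions, then $\partial E_1/\partial x_1$ alone would already be an octahedron relation, which it is not (for $H_1$ one has $\partial E_1/\partial x_1=(2x_1-x_2-x_3)(x_{13}-x_{12})+\alpha_2-\alpha_3$, nonzero on generic solutions). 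The implicit-function reasoning fails because the two determinations of $x$ from $Q_{12}=0$ and from $Q_{13}=0$ coincide only on the hypersurface $E_1=0$ in the remaining variables, and $\partial/\partial x_1$ differentiates in a direction transversal to that hypersurface: equality of two functions on a hypersurface does not force equality of their derivatives there. Two secondary problems: the individual pieces of your split depend on $x$ (resp.\ $x_{123}$), so they are not elements of $\mathcal{R}=\C[x_{1},x_{2},x_{3},x_{12},x_{23},x_{13}]$ and cannot be ``absorbed'' into a fractional ideal of $\mathcal{R}$ one at a time; and even a correct geometric vanishing statement would not by itself produce the explicit certificate $rR_1=r_1E_1+r_2E_{23}$ required by Definitions~\ref{def:fractionalideal} and~\ref{def:ocathedron}.

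The intended argument is much shorter and bypasses the Jacobians entirely: differentiate identity \eqref{eq:fractional1} with respect to $x_1$ and identity \eqref{eq:fractional2} with respect to $x_{23}$, add, and divide by $2$. On the left the first-order cross terms cancel in pairs, leaving $\frac{\partial^2 E_3}{\partial x_1\partial x_{23}}E_2-\frac{\partial^2 E_2}{\partial x_1\partial x_{23}}E_3$; on the right the factor $Q_{23}^{2,3}$, which depends only on $x_2$ and $x_3$, passes through both derivatives, producing exactly $Q_{23}^{2,3}R_1$. This exhibits $R_1$ explicitly as an element of the fractional ideal $\langle E_2,E_3\rangle_{Q_{23}^{2,3}}$, hence of the one generated by any two corner polynomials. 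Your instinct to lean on Proposition~\ref{th:fractionalideal} was right; the missing idea is that one should differentiate its identities rather than redo a resultant computation from scratch.
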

\begin{proof}
Since polynomials $E_1$ and $E_{23}$ are of degree 2 with respect to the variables $x_1$, resp. $x_{23}$, and of degree 1 with respect to all other variables, it follows that polynomial $R_1$ is multi-affine. Using equations~\eqref{eq:fractional1} and \eqref{eq:fractional2}, we find:
\begin{align*}
\frac{\partial^{2} E_{3}}{\partial x_{1}\partial x_{23}}E_{2}-\frac{\partial^{2} E_{2}}{\partial x_{1}\partial x_{23}}E_{3} & =\frac{1}{2}\left(\frac{\partial}{\partial x_{1}}\left(\frac{\partial E_{3}}{\partial x_{23}}E_{2}-\frac{\partial E_{2}}{\partial x_{23}}E_{3}\right)+\frac{\partial}{\partial x_{23}}\left(\frac{\partial E_{3}}{\partial x_{1}}E_{2}-\frac{\partial E_{2}}{\partial x_{1}}E_{3}\right)\right)\\
& =\frac{1}{2}Q_{23}^{2,3}\left(\frac{\partial E_{1}}{\partial x_{1}}+\frac{\partial E_{23}}{\partial x_{23}}\right)
=Q_{23}^{2,3}R_{1}.
\end{align*}
Therefore, $R_1$ is in a fractional ideal generated by $E_{2}$ and $E_{3}$ (and therefore in a fractional ideal generated by any other two corner equations from \eqref{eq:cornersystem}).
\end{proof}
The following proposition is based on straightforward case-by-case computations for all ABS equations:
\begin{prop}\label{obs:symmetry}
\quad

\begin{itemize}
\item In the case coming from the quad-equation $A_{2}$, the following equation is satisfied:
\[
\frac{\sin(\alpha_{2}-\alpha_{3})}{\sin(\alpha_{1})}R_{1}+
\frac{\sin(\alpha_{3}-\alpha_{1})}{\sin(\alpha_{2})}R_{2}+
\frac{\sin(\alpha_{1}-\alpha_{2})}{\sin(\alpha_{3})}R_{3}=0.
\]
\item In the case coming from the quad-equation $Q_{4}$, the following equation is satisfied:
\[
\sn(\alpha_{2}-\alpha_{3})R_{1}+\sn(\alpha_{3}-\alpha_{1})R_{2}+\sn(\alpha_{1}-\alpha_{2})R_{3}=0.
\]
\item In these two cases, any two of the polynomials $R_1$, $R_2$, $R_3$ are linearly independent.
\item In all other cases we have:
\[
\frac{R_{1}}{f_{1}}=\frac{R_{2}}{f_{2}}=\frac{R_{3}}{f_{3}}=:\Omega_1,
\]
with the constant factors
\[
f_1:=\frac{\partial^2 Q_{12}}{\partial x\partial x_1}=-\frac{\partial^2 Q_{13}}{\partial x\partial x_1},\qquad
f_2:=\frac{\partial^2 Q_{23}}{\partial x\partial x_1}=-\frac{\partial^2 Q_{12}}{\partial x\partial x_1},\qquad
f_3:=\frac{\partial^2 Q_{13}}{\partial x\partial x_3}=-\frac{\partial^2 Q_{23}}{\partial x\partial x_3}.
\]
\end{itemize}
\end{prop}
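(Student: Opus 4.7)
The plan is to perform the case-by-case verification indicated in the statement, using the explicit multi-affine forms of the ABS equations tabulated in Appendix~\ref{sec:quadeqs}. For each choice of $Q$, the six quad polynomials $Q_{ij},\bar{Q}_{ij}$ from \eqref{eq:quadsystem} are given by closed formulas, so the corner polynomials $E_\ell$ defined in \eqref{eq:cornersystem} and the polynomials $R_1, R_2, R_3$ of Theorem~\ref{th:oct1} become explicit multi-affine polynomials in $x_1,x_2,x_3,x_{12},x_{23},x_{13}$ whose coefficients are rational in $\alpha_1,\alpha_2,\alpha_3$.

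First I would deal with all ABS equations other than $A_2$ and $Q_4$. For these, the mixed second derivatives $f_1,f_2,f_3$ appearing in the statement are genuine scalars, and the sign identities $\partial^2 Q_{12}/\partial x\partial x_1 = -\partial^2 Q_{13}/\partial x\partial x_1$ together with the cyclic analogues are immediate from the antisymmetry $\Ell(\sigma_{ij})=-\Ell(\sigma_{ji})$ encoded in the ordering of factors in \eqref{eq:cornersystem}. The assertion $R_1/f_1=R_2/f_2=R_3/f_3$ then amounts to two polynomial identities, which I would verify by direct substitution equation by equation.

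For $A_2$ and $Q_4$, the three polynomials $R_i$ are not pairwise proportional but satisfy a single linear relation. To check it for $A_2$, I would insert its explicit three-leg functions $\psi,\phi$ into $E_i,E_{ij}$, differentiate, and show that the expression
\[
\frac{\sin(\alpha_{2}-\alpha_{3})}{\sin\alpha_{1}}R_{1}+\frac{\sin(\alpha_{3}-\alpha_{1})}{\sin\alpha_{2}}R_{2}+\frac{\sin(\alpha_{1}-\alpha_{2})}{\sin\alpha_{3}}R_{3}
\]
reduces, monomial by monomial in $x_1,x_2,x_3,x_{12},x_{23},x_{13}$, to a scalar multiple of the classical identity
\[
\sin(\alpha_{2}-\alpha_{3})\sin\alpha_{1}+\sin(\alpha_{3}-\alpha_{1})\sin\alpha_{2}+\sin(\alpha_{1}-\alpha_{2})\sin\alpha_{3}=0.
\]
The analogous argument for $Q_4$ rests on the parallel three-term identity for $\sn$. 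Linear independence of any two of $R_1,R_2,R_3$ in these two cases is then automatic: for generic parameters no coefficient of the unique relation vanishes, so the relation space is one-dimensional, while each $R_i$ is itself non-zero.

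The main obstacle is the $Q_4$ case. Its three-leg form is written in terms of Jacobi elliptic functions of $\alpha_i\pm\alpha_j$, and collapsing the candidate octahedron sum to zero requires careful use of the addition theorems for $\sn$, $\mathrm{cn}$, $\mathrm{dn}$ together with the precise normalization adopted in the appendix. Everything else in the argument is mechanical polynomial or trigonometric bookkeeping.
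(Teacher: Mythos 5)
The paper offers no proof of this proposition beyond the remark that it ``is based on straightforward case-by-case computations for all ABS equations,'' and your plan is precisely that same case-by-case verification from the explicit formulas in Appendix~\ref{sec:quadeqs}, so the approaches coincide. One small caution: your linear-independence argument for the $A_2$ and $Q_4$ cases is circular as stated (asserting that the relation space is one-dimensional is equivalent to the linear independence you are trying to establish), and your attribution of the sign identities for $f_1,f_2,f_3$ to the antisymmetry of $\Ell$ rather than to the symmetries of the quad polynomials $Q$ is off the mark --- but both points are repaired trivially within your own framework by direct inspection of the explicit polynomials, e.g.\ by exhibiting a nonsingular $2\times 2$ coefficient minor of $R_1,R_2$.
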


Thus, in the cases coming from $A_{2}$ and $Q_{4}$, all corner equations from the system~\eqref{eq:cornersystem} are in a fractional ideal generated by $R_{1}$ and $R_{2}$. In all other cases, one has just one octahedron relation $\Omega_1=0$, symmetric with respect to cyclic permutations of indices $(123)$, and one has to look for further octahedron relations. A list of the polynomials $R_{1}$, resp. $\Omega_1$, can be found in Appendix~\ref{sec:octlist}.

Before we turn to the problem of finding the second independent octahedron relation for equations not coming from case $A_2$ and $Q_4$, we mention another expression for the octahedron relation $\Omega_{1}=0$.
\begin{prop}\label{th:summation}
In the cases coming from the quad-equations $Q_{1}^{\delta}$, $Q_{2}$, $Q_{3}^{\delta}$ and $A_{1}^{\delta}$ the octahedron relation $\Omega_{1}=0$ can be written as
\[
\Omega_{1}=Q_{12}+Q_{23}+Q_{13}+T=0,
\]
where $T=T(x,x_{12},x_{23},x_{13})=0$ is the tetrahedron equation. In the cases coming from the quad-equations $H_{1}$, $H_{2}$ and $H_{3}^{\delta}$ the octahedron relation $\Omega_{1}=0$ can be written as
\[
\Omega_{1}=Q_{12}+Q_{23}+Q_{13}=0.
\]
\end{prop}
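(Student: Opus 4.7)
The plan is a direct case-by-case check against the seven ABS equations covered by the statement, i.e.\ the full ABS list minus $Q_4$ and $A_2$, which are treated separately in Proposition~\ref{obs:symmetry}. For each such $Q$, I would form the candidate right-hand side
\[
S := Q_{12}+Q_{23}+Q_{13}+T \quad\text{or}\quad S := Q_{12}+Q_{23}+Q_{13},
\]
depending on whether $Q \in \{Q_1^\delta,Q_2,Q_3^\delta,A_1^\delta\}$ or $Q \in \{H_1,H_2,H_3^\delta\}$, using the explicit form of $Q$ from Appendix~\ref{sec:quadeqs} and, when applicable, the standard normalization of the tetrahedron polynomial $T$ from the ABS list. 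The goal is then to identify $S$ with the polynomial $\Omega_1$ supplied by Theorem~\ref{th:oct1} and Proposition~\ref{obs:symmetry}.

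The first and essentially only nontrivial step is to check that $S$ does not actually depend on $x$. A priori each of $Q_{12}, Q_{23}, Q_{13}$ is linear in $x$, and $T$ is linear in $x$ as well, whereas $\Omega_1\in\C[x_1,x_2,x_3,x_{12},x_{23},x_{13}]$ by Theorem~\ref{th:oct1}. One has to verify that the coefficient of $x$ in $S$ vanishes identically --- this is the algebraic heart of the proposition, and reflects the fact that summing the three quad-relations around the vertex $x$ (with the tetrahedron correction in the $Q$- and $A_1$-cases) eliminates the field at the base vertex. After this cancellation, $S$ becomes a multi-affine polynomial in the six remaining vertex variables.

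For the identification $S = \Omega_1$ itself, the quickest route is to invoke the uniqueness implicit in Proposition~\ref{obs:symmetry}: outside the $A_2$/$Q_4$ cases, $\Omega_1$ is, up to a scalar, the unique symmetric multi-affine octahedron relation of the system. Having verified that $S$ is multi-affine (the step above), invariant under cyclic permutations of $(123)$ (by construction, since $T$ itself is invariant in $(x_{12},x_{23},x_{13})$), and nonzero, one obtains $S = c\,\Omega_1$, and fixes $c=1$ by matching a single coefficient against the table in Appendix~\ref{sec:octlist}. Alternatively, one may just substitute the explicit $Q_{ij}$ and $T$ directly into $\Omega_1 = (\partial E_1/\partial x_1 + \partial E_{23}/\partial x_{23})/(2f_1)$ from Theorem~\ref{th:oct1} and compare term by term.

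The principal obstacle is the bulk of the $x$-cancellation identity: seven structurally parallel but bookkeeping-heavy polynomial computations, with the $H$-cases requiring separate treatment since no tetrahedron correction is needed there, and with care required for the overall normalization of $T$ in each $Q$-type equation. None of the individual calculations is conceptually deep --- they are all elementary multi-affine polynomial manipulations --- but collectively they are what the proof reduces to.
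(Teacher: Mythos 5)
Your plan --- a seven-fold case-by-case verification --- would eventually work via its fallback branch, but it is a genuinely different and much heavier route than the paper's, and your preferred identification mechanism is broken. The paper's proof is structural except for the very last step: it uses the fact (from Proposition~\ref{obs:symmetry}) that $\Omega_1$ has \emph{three} representations $\Omega_1=R_1/f_1=R_2/f_2=R_3/f_3$. Expanding $R_1/f_1$ and using only multi-affinity of the $Q_{ij}$, $\bar{Q}_{ij}$ gives $\Omega_1=Q_{12}+Q_{13}+G_1$ with $G_1$ independent of $x_1$ and $x_{123}$; the other two representations give $\Omega_1=Q_{12}+Q_{23}+G_2$ and $\Omega_1=Q_{23}+Q_{13}+G_3$ with $G_2$, $G_3$ independent of $x_2$, resp.\ $x_3$. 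Setting $T:=G_1-Q_{23}=G_2-Q_{13}=G_3-Q_{12}$, these three expressions together force $T=T(x,x_{12},x_{23},x_{13})$, so the decomposition $\Omega_1=Q_{12}+Q_{23}+Q_{13}+T$ with $T$ supported on the tetrahedron variables holds \emph{uniformly}, with no case analysis and with the normalization of $T$ determined automatically. Only the identification of this residue $T$ with the tetrahedron polynomial (resp.\ with $0$ in the $H$-cases) is done case by case. Your approach discards this and pays for it with seven full polynomial computations, losing the explanation of why the residue lives on the tetrahedron.

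The concrete gap: your uniqueness shortcut is false. $\Omega_1$ is \emph{not}, up to scalar, the unique cyclically symmetric multi-affine octahedron relation --- $\Omega_2$ from Proposition~\ref{obs:oct2} is another one, and so is every combination $a\Omega_1+b\Omega_2$, so symmetry plus multi-affinity cannot pin down $S$ up to a single constant. Moreover, to invoke any such uniqueness you would first need to know that your candidate $S=Q_{12}+Q_{23}+Q_{13}+T$ is an octahedron relation in the sense of Definition~\ref{def:ocathedron}, i.e.\ lies in a fractional ideal generated by the corner polynomials; this is not a priori clear, since the quad-polynomials $Q_{ij}$ are not consequences of the corner equations (solutions of corner equations need not solve the quad-equations), and in the paper's logic it only \emph{follows} from $S=\Omega_1$. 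Hence the only viable branch of your proposal is the direct term-by-term comparison of $S$ with $(\partial E_1/\partial x_1+\partial E_{23}/\partial x_{23})/(2f_1)$, carried out for each of the seven equations.
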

\begin{proof}
A straightforward computation gives:
\[
\Omega_{1}=\frac{1}{2f_{1}}\left(\frac{\partial E_{1}}{\partial x_{1}}+\frac{\partial E_{23}}{\partial x_{23}}\right)
=Q_{13}+Q_{12}+G_1,
\]
where
\begin{align*}
G_{1} & = \frac{1}{2f_{1}}\left(\frac{\partial^{2}Q_{13}}{\partial x\partial x_{1}}Q_{12}-\frac{\partial^{2}Q_{12}}{\partial x\partial x_{1}}Q_{13}+\frac{\partial Q_{12}}{\partial x}\frac{\partial Q_{13}}{\partial x_{1}}-\frac{\partial Q_{13}}{\partial x}\frac{\partial Q_{12}}{\partial x_{1}}\right)\\
& \phantom{=\ } +\frac{1}{2f_{1}}\left(\frac{\partial^{2}\bar{Q}_{13}}{\partial x_{23}\partial x_{123}}\bar{Q}_{12}-\frac{\partial^{2}\bar{Q}_{12}}{\partial x_{23}\partial x_{123}}\bar{Q}_{13}+\frac{\partial\bar{Q}_{13}}{\partial x_{123}}\frac{\partial\bar{Q}_{12}}{\partial x_{23}}-\frac{\partial\bar{Q}_{12}}{\partial x_{123}}\frac{\partial\bar{Q}_{13}}{\partial x_{23}}\right).
\end{align*}
Using the fact that all polynomials $Q_{ij}$, $\bar{Q}_{ij}$ are multi-affine, we immediately compute that
\[
\frac{\partial G_{1}}{\partial x_{1}}=0, \qquad \frac{\partial G_{1}}{\partial x_{123}}=0.
\]
Therefore, $G_{1}=G_{1}(x,x_{2},x_{3},x_{12},x_{23},x_{13})$ is multi-affine and independent of $x_{1}$ and $x_{123}$. Analogously,
\[
\Omega_{1}=\frac{1}{2f_{2}}\left(\frac{\partial E_{2}}{\partial x_{2}}+\frac{\partial E_{13}}{\partial x_{13}}\right)=Q_{12}+Q_{23}+G_2(x,x_{3},x_{1},x_{12},x_{23},x_{13}),
\]
where $G_2(x,x_{3},x_{1},x_{12},x_{23},x_{13})$ is multi-affine and independent of $x_{2}$ and $x_{123}$, and
\[
\Omega_{1}=\frac{1}{2f_{3}}\left(\frac{\partial E_{3}}{\partial x_{3}}+\frac{\partial E_{12}}{\partial x_{12}}\right)=Q_{23}+Q_{13}+G_3(x,x_{1},x_{2},x_{12},x_{23},x_{13}),
\]
where $G_3(x,x_{1},x_{2},x_{12},x_{23},x_{13})$ is multi-affine and independent of $x_{3}$ and $x_{123}$. Thus,
\[
Q_{13}+Q_{12}+G_{1}=Q_{12}+Q_{23}+G_{2}=Q_{23}+Q_{13}+G_{3}.
\]
Set
\[
T:=G_{1}-Q_{23}=G_{2}-Q_{13}=G_{3}-Q_{12}.
\]
The first expression for $T$ shows that it is independent of $x_{1}$, the second shows that it is independent of $x_{2}$, and the third shows that it is independent of $x_{3}$. Therefore,
\[
\Omega_{1}=Q_{12}+Q_{23}+Q_{13}+T(x,x_{12},x_{23},x_{13}),
\]
where $T(x,x_{12},x_{23},x_{13})$ is multi-affine and independent of $x_{1}$, $x_{2}$, $x_{3}$ and $x_{123}$. One shows by a simple case-by-case that in the cases coming from $Q_{1}^{\delta}$, $Q_{2}$, $Q_{3}^{\delta}$ and $A_{1}^{\delta}$, the polynomial $T$ defines the tetrahedron relation $T=0$, while in the cases coming from $H_{1}$, $H_{2}$ and $H_{3}^{\delta}$ we have $T\equiv0$.
\end{proof}

\begin{rem}
Of course, in the cases coming from $Q_{1}^{\delta}$, $Q_{2}$, $Q_{3}^{\delta}$ and $A_{1}^{\delta}$ one can write the octahedron relation $\Omega_{1}=0$ as
\[
\Omega_{1}=\bar{Q}_{12}+\bar{Q}_{23}+\bar{Q}_{13}+\bar{T}=0,
\]
where the polynomial $\bar{T}=\bar{T}(x_1,x_2,x_3,x_{123})$ defines the tetrahedron relation $\bar{T}=0$, while
in the cases coming from $H_{1}$, $H_{2}$ and $H_{3}^{\delta}$ we have
\[
\Omega_{1}=\bar{Q}_{12}+\bar{Q}_{23}+\bar{Q}_{13}=0.
\]
\end{rem}

\section{Second octahedron relation}\label{sect: 2nd oct}

Turning to the problem of finding the second octahedron relation in the cases different from $A_2$ and $Q_4$, we have the following results.
\begin{theo}\label{th:oct2}
The polynomials
\[
P_{1}:=E_{1}-x_{1}R_{1}\quad\text{and}\quad P_{23}:=E_{23}-x_{23}R_{1}
\]
satisfy the following identities:
\begin{equation}\label{eq:fractional P1}
\frac{\partial R_{1}}{\partial x_{23}}P_{1}-\frac{\partial P_{1}}{\partial x_{23}}R_{1}=
\frac{\partial R_{1}}{\partial x_{23}}E_{1},\qquad
\frac{\partial R_{1}}{\partial x_{1}}P_{1}-\frac{\partial P_{1}}{\partial x_{1}}R_{1}=
\frac{\partial R_{1}}{\partial x_{23}}E_{23},
\end{equation}
and, similarly,
\begin{equation}\label{eq:fractional P23}
\frac{\partial R_{1}}{\partial x_{23}}P_{23}-\frac{\partial P_{23}}{\partial x_{23}}R_{1}=
\frac{\partial R_{1}}{\partial x_{1}}E_{1}, \qquad
\frac{\partial R_{1}}{\partial x_{1}}P_{23}-\frac{\partial P_{23}}{\partial x_{1}}R_{1}=
\frac{\partial R_{1}}{\partial x_{1}}E_{13}.
\end{equation}
Thus, $P_1=0$ and $P_{23}=0$ are octahedron relations of the system~\eqref{eq:cornersystem}. One find further octahedron relations under cyclic permutations of indices $(123)$.
\end{theo}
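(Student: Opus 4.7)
My plan is to first verify that $P_1$ and $P_{23}$ are multi-affine, then to check the four identities by direct algebraic manipulation (reducing the substantive part to a single discriminant-type identity), and finally to draw the octahedron-relation conclusion from them.

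For multi-affinity, I write $E_1=Ax_1^2+Bx_1+C$ with $A,B,C$ independent of $x_1$ (and of $x_{23}$). Then the coefficient of $x_1$ in $R_1$ is exactly $\tfrac{1}{2}\partial_1^2 E_1=A$, because the $\partial_{23}E_{23}$ half of $R_1$ contributes nothing to the $x_1$-dependence (since $E_{23}$ is $x_1$-free). Hence the $x_1^2$ contributions of $E_1$ and $x_1R_1$ cancel in $P_1$, and together with the affine dependence of $E_1$ on the remaining variables and the multi-affinity of $R_1$ from Theorem~\ref{th:oct1}, this forces $P_1$ to be multi-affine. The argument for $P_{23}$ is symmetric.

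Two of the four identities are formal consequences of the vanishing cross-derivatives $\partial_{23}E_1=0$ and $\partial_1 E_{23}=0$. For instance, $\partial_{23}P_1=-x_1\partial_{23}R_1$ makes the first equation of \eqref{eq:fractional P1} telescope to $\partial_{23}R_1\cdot E_1$. The substantive step is the ``cross'' identity: expanding $P_1=E_1-x_1R_1$ turns the left-hand side of the second equation of \eqref{eq:fractional P1} into $\partial_1R_1\cdot E_1-\partial_1E_1\cdot R_1+R_1^2$. Substituting $R_1=\tfrac{1}{2}(\partial_1E_1+\partial_{23}E_{23})$ together with $\partial_1R_1=\tfrac{1}{2}\partial_1^2E_1$ and $\partial_{23}R_1=\tfrac{1}{2}\partial_{23}^2E_{23}$ reduces the identity to
\[
2E_1\frac{\partial^2E_1}{\partial x_1^2}-\Bigl(\frac{\partial E_1}{\partial x_1}\Bigr)^{2}=2E_{23}\frac{\partial^2E_{23}}{\partial x_{23}^2}-\Bigl(\frac{\partial E_{23}}{\partial x_{23}}\Bigr)^{2},
\]
that is, to the statement that the discriminant of $E_1$ as a quadratic in $x_1$ coincides with the discriminant of $E_{23}$ as a quadratic in $x_{23}$. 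This discriminant identity is the main obstacle. My plan to prove it is to substitute the Wronskian-type formulas $E_1=\partial_xQ_{12}\cdot Q_{13}-\partial_xQ_{13}\cdot Q_{12}$ and $E_{23}=\partial_{x_{123}}\bar{Q}_{13}\cdot\bar{Q}_{12}-\partial_{x_{123}}\bar{Q}_{12}\cdot\bar{Q}_{13}$ from \eqref{eq:cornersystem}, rewrite both sides as biquadratic combinations of the $Q_{ij}$ and $\bar{Q}_{ij}$, and then close the identification using the symmetric biquadratic identities for the ABS list (Proposition~5 of \cite{ABS1}), the same tool invoked at the end of the proof of Proposition~\ref{th:fractionalideal}. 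The remaining cross identity in \eqref{eq:fractional P23} is handled in parallel, producing the corresponding discriminant equality.

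With the four identities established, the octahedron-relation conclusion is immediate: $P_1$ and $P_{23}$ are multi-affine, and by their very definitions they lie in the fractional ideals $\langle E_1,R_1\rangle_r$ and $\langle E_{23},R_1\rangle_r$ respectively; since $R_1$ already belongs to the fractional ideal of any two corner polynomials by Theorem~\ref{th:oct1}, so do $P_1$ and $P_{23}$, and the additional octahedron relations promised in the theorem arise by cyclic permutation of the indices $(123)$.
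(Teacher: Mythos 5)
Your reduction of the theorem to the discriminant identity is exactly the paper's route: the first identity in \eqref{eq:fractional P1} telescopes because $\partial E_1/\partial x_{23}=0$, and expanding the cross identity with $R_1=\tfrac12(\partial_{x_1}E_1+\partial_{x_{23}}E_{23})$ leaves precisely the equality of discriminants
$E_1\,\partial^2_{x_1}E_1-\tfrac12(\partial_{x_1}E_1)^2=E_{23}\,\partial^2_{x_{23}}E_{23}-\tfrac12(\partial_{x_{23}}E_{23})^2$,
which is the paper's Lemma~\ref{lem:discr}. Your multi-affinity check for $P_1$ (cancellation of the $x_1^2$ terms because the $x_1$-coefficient of $R_1$ is $\tfrac12\partial^2_{x_1}E_1$) is correct and is a point the paper leaves implicit. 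Up to this point the proposal is sound and essentially identical to the paper's argument.

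The genuine gap is that you do not prove the discriminant identity; you only announce a plan, and the plan as stated is unlikely to close. Substituting $E_1=\partial_xQ_{12}\cdot Q_{13}-\partial_xQ_{13}\cdot Q_{12}$ and $E_{23}=\partial_{x_{123}}\bar{Q}_{13}\cdot\bar{Q}_{12}-\partial_{x_{123}}\bar{Q}_{12}\cdot\bar{Q}_{13}$ produces two expressions living on disjoint sets of data (the bottom face built on $x$ versus the top face built on $x_{123}$), and Proposition~5 of \cite{ABS1} relates biquadratics of adjacent faces but gives no direct bridge between these two degree-two-in-six-variables expressions. The paper needs two further ideas that your sketch omits: first, after full expansion the discriminant of $E_1$ in $x_1$ turns out to be symmetric under $x\leftrightarrow x_1$, which converts it into the discriminant in $x$ of $F_1:=\partial_{x_1}Q_{12}\cdot Q_{13}-\partial_{x_1}Q_{13}\cdot Q_{12}$; second, both $F_1$ and $E_{23}$ are re-expressed as eliminations against the tetrahedron polynomial $T(x,x_{12},x_{23},x_{13})$, namely $F_1=\partial_{x_{23}}Q_{23}\cdot T-\partial_{x_{23}}T\cdot Q_{23}$ and $E_{23}=\beta(\partial_xQ_{23}\cdot T-\partial_xT\cdot Q_{23})$, after which the two discriminants agree provided $\beta^2=1$, a fact verified case by case over the ABS list. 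Without the symmetry swap, the tetrahedron representations, and the normalization check, the ``identification'' you invoke does not follow; since this lemma is the entire substance of the theorem, the proof is incomplete as it stands.
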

\begin{proof}
We prove relations \eqref{eq:fractional P1} for $P_1$, the proof of relations \eqref{eq:fractional P23} for $P_{23}$ being similar. The first of the relations in \eqref{eq:fractional P1} is trivial:
\[
\frac{\partial R_{1}}{\partial x_{23}}P_{1}-\frac{\partial P_{1}}{\partial x_{23}}R_{1}=\frac{\partial R_{1}}{\partial x_{23}}(E_{1}-x_{1}R_{1})+x_{1}\frac{\partial R_{1}}{\partial x_{23}}R_{1}=\frac{\partial R_{1}}{\partial x_{23}}E_{1}.
\]
As for the second relation in \eqref{eq:fractional P1}, we compute:
\begin{multline*}
\frac{\partial R_{1}}{\partial x_{1}}P_{1}-\frac{\partial P_{1}}{\partial x_{1}}R_{1}-\frac{\partial R_{1}}{\partial x_{23}}E_{23}=\frac{1}{2}\frac{\partial^{2}E_{1}}{\partial x_{1}^{2}}\left(E_{1}-\frac{1}{2}\left(\frac{\partial E_{1}}{\partial x_{1}}+\frac{\partial E_{23}}{\partial x_{23}}\right)\right)\\
-\frac{1}{2}\left(\frac{\partial E_{1}}{\partial x_{1}}-\frac{1}{2}\left(\frac{\partial E_{1}}{\partial x_{1}}+\frac{\partial E_{23}}{\partial x_{23}}\right)-\frac{1}{2}x_{1}\frac{\partial^{2}E_{1}}{\partial x_{1}^{2}}\right)\left(\frac{\partial E_{1}}{\partial x_{1}}+\frac{\partial E_{23}}{\partial x_{23}}\right)-\frac{1}{2}\frac{\partial^{2}E_{23}}{\partial x_{23}^{2}}E_{23}\\
=\frac{1}{2}\left(E_{1} \frac{\partial^{2} E_{1}}{\partial x_{1}^{2}}-\frac{1}{2}\left(\frac{\partial E_{1}}{\partial x_{1}}\right)^{2}-E_{23} \frac{\partial^{2} E_{23}}{\partial x_{23}^{2}}+\frac{1}{2}\left(\frac{\partial E_{23}}{\partial x_{23}}\right)^{2}\right).
\end{multline*}
This vanishes due to following Lemma~\ref{lem:discr}.
\end{proof}

\begin{lemma}\label{lem:discr}
The discriminants of the opposite corner polynomials with respect to their central points coincide:
\[
E_{1} \frac{\partial^{2} E_{1}}{\partial x_{1}^{2}}-\frac{1}{2}\left(\frac{\partial E_{1}}{\partial x_{1}}\right)^{2}=E_{23} \frac{\partial^{2} E_{23}}{\partial x_{23}^{2}}-\frac{1}{2}\left(\frac{\partial E_{23}}{\partial x_{23}}\right)^{2}.
\]
\end{lemma}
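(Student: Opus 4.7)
The plan is to recognize both sides of the claimed equality as $-\tfrac12$ times the discriminant of the respective quadratic. For any quadratic $P(t)=\alpha t^{2}+\beta t+\gamma$,
\[
P\,\frac{d^{2}P}{dt^{2}}-\tfrac12\Bigl(\frac{dP}{dt}\Bigr)^{2}=2\alpha\gamma-\tfrac12\beta^{2}=-\tfrac12\bigl(\beta^{2}-4\alpha\gamma\bigr),
\]
so the claim is equivalent to the equality of the $x_{1}$-discriminant of $E_{1}$ and the $x_{23}$-discriminant of $E_{23}$, each being a polynomial in the remaining variables $x_{2},x_{3},x_{12},x_{13}$.

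To compute the $x_{1}$-discriminant of $E_{1}$, I would start from $E_{1}=\tfrac{\partial Q_{12}}{\partial x}Q_{13}-\tfrac{\partial Q_{13}}{\partial x}Q_{12}$ as given in~\eqref{eq:cornersystem} and expand $Q_{12}=A_{0}+A_{1}x_{1}$ and $Q_{13}=B_{0}+B_{1}x_{1}$, where $A_{i}(x,x_{2},x_{12})$ and $B_{i}(x,x_{3},x_{13})$ are multi-affine. Writing $W_{x}(f,g):=f_{x}g-fg_{x}$, a direct expansion yields
\[
E_{1}=W_{x}(A_{0},B_{0})+\bigl(W_{x}(A_{0},B_{1})+W_{x}(A_{1},B_{0})\bigr)x_{1}+W_{x}(A_{1},B_{1})x_{1}^{2}.
\]
The Pl\"ucker-type identity $W_{x}(A_{0},B_{1})W_{x}(A_{1},B_{0})=W_{x}(A_{0},B_{0})W_{x}(A_{1},B_{1})-W_{x}(A_{0},A_{1})W_{x}(B_{0},B_{1})$, together with the observations that $W_{x}(A_{0},A_{1})=-Q_{12}^{2,12}$ and $W_{x}(B_{0},B_{1})=-Q_{13}^{3,13}$ are precisely the biquadratics already featured in Proposition~\ref{th:fractionalideal}, then recasts the $x_{1}$-discriminant of $E_{1}$ as
\[
\Delta_{x_{1}}(E_{1})=\bigl(W_{x}(A_{0},B_{1})-W_{x}(A_{1},B_{0})\bigr)^{2}-4\,Q_{12}^{2,12}\,Q_{13}^{3,13}.
\]
A structurally identical computation, with $x_{1},x$ replaced by $x_{23},x_{123}$ and the quad-equations replaced by their barred counterparts, yields
\[
\Delta_{x_{23}}(E_{23})=\bigl(W_{x_{123}}(\bar{A}_{0},\bar{B}_{1})-W_{x_{123}}(\bar{A}_{1},\bar{B}_{0})\bigr)^{2}-4\,\bar{Q}_{12}^{3,13}\,\bar{Q}_{13}^{2,13}.
\]

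The lemma is thereby reduced to a polynomial identity in $x_{2},x_{3},x_{12},x_{13}$, and I would handle the two pieces separately. The biquadratic products $Q_{12}^{2,12}Q_{13}^{3,13}$ and $\bar{Q}_{12}^{3,13}\bar{Q}_{13}^{2,13}$ are matched using the pairwise proportionality of biquadratics of opposite quad-equations, a consequence of Proposition~5 of~\cite{ABS1} (already invoked at the end of the proof of Proposition~\ref{th:fractionalideal}). The squared cross terms are then identified by exploiting the alternative representations of the corner polynomials produced in that same proof, namely $E_{1}$ as the elimination of $x_{123}$ between $\bar{Q}_{23}$ and the tetrahedron polynomial $\bar{T}$, and $E_{23}$ as the elimination of $x$ between $Q_{23}$ and $T$; multi-affinity and the tetrahedron relation then produce the required identification.

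The main obstacle is precisely the comparison of the squared cross terms: unlike the biquadratic products, they are not biquadratics of any single equation, so their equality cannot be read off directly from Proposition~5 of~\cite{ABS1}. I expect this last step ultimately to rest on a short case-by-case algebraic check along the ABS list, in the same spirit as the sign normalization $\widetilde{T}=\beta\bar{T}$ that was fixed in the proof of Proposition~\ref{th:fractionalideal}.
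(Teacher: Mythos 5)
Your reduction is set up correctly: the expression in the lemma is indeed $-\tfrac12$ times the discriminant, the expansion of $E_1$ in powers of $x_1$ with Wronskian coefficients is right, the Pl\"ucker identity holds, and the identifications $W_x(A_0,A_1)=-Q_{12}^{2,12}$ and $W_x(B_0,B_1)=-Q_{13}^{3,13}$ are correct, so your formula for $\Delta_{x_1}(E_1)$ is valid. The matching of the biquadratic products via $Q_{12}^{2,12}/\bar{Q}_{13}^{2,12}=\bar{Q}_{12}^{3,13}/Q_{13}^{3,13}$ from \cite[Proposition 5]{ABS1} is also correct (your superscript $\bar{Q}_{13}^{2,13}$ should read $\bar{Q}_{13}^{2,12}$). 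But the proof stops exactly where the real work begins: the equality of the squared cross terms is asserted, not proved. The sentence ``multi-affinity and the tetrahedron relation then produce the required identification'' is not an argument --- the cross terms are multi-affine polynomials in $x_2,x_3,x_{12},x_{13}$ that are not biquadratics of any single quad-equation, so nothing in \cite{ABS1} identifies them, and your fallback of an unspecified case-by-case check of a four-variable polynomial identity for each ABS equation would amount to verifying the whole lemma case by case. This is a genuine gap, and it is the harder half of the statement.

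The paper closes it without ever splitting the discriminant, by means of a swap symmetry: for two polynomials $f,g$ multi-affine in $u$ and $v$, the quantity $P\,\partial_v^2P-\tfrac12(\partial_vP)^2$ with $P=\partial_uf\cdot g-f\cdot\partial_ug$ is unchanged if one exchanges the roles of $u$ and $v$ (this is what the long expansion in the paper's proof establishes, being ``manifestly symmetric'' under $x\leftrightarrow x_1$). Applied to $(Q_{12},Q_{13})$ with $(u,v)=(x,x_1)$ this converts $\Delta_{x_1}(E_1)$ into $\Delta_{x}(F_1)$ with $F_1=\partial_{x_1}Q_{12}\cdot Q_{13}-\partial_{x_1}Q_{13}\cdot Q_{12}$; applied to $(Q_{23},T)$, together with the representations $F_{1}=\partial_{x_{23}}Q_{23}\cdot T-\partial_{x_{23}}T\cdot Q_{23}$ and $E_{23}=\beta\,(\partial_{x}Q_{23}\cdot T-\partial_{x}T\cdot Q_{23})$ and the case-by-case check that the single constant $\beta$ satisfies $\beta^{2}=1$, it transports the entire discriminant to $\Delta_{x_{23}}(E_{23})$ in one step. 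If you wish to salvage your framework, note that the swap symmetry is actually immediate in your own notation: writing $Q_{12}=q_{00}+q_{10}x+q_{01}x_1+q_{11}xx_1$ and $Q_{13}=p_{00}+p_{10}x+p_{01}x_1+p_{11}xx_1$, your cross term equals $q_{10}p_{01}+q_{01}p_{10}-q_{00}p_{11}-q_{11}p_{00}$ and the biquadratics equal $q_{00}q_{11}-q_{10}q_{01}$ and $p_{00}p_{11}-p_{10}p_{01}$, all visibly symmetric under exchanging the roles of $x$ and $x_1$ --- but once you have this, the Pl\"ucker decomposition becomes unnecessary, and you still need the tetrahedron representations and $\beta^2=1$ to finish.
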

\begin{proof}
We have
{\allowdisplaybreaks
\begin{multline*}
E_{1} \frac{\partial^{2} E_{1}}{\partial x_{1}^{2}}-\frac{1}{2}\left(\frac{\partial E_{1}}{\partial x_{1}}\right)^{2}=2\left(\frac{\partial Q_{12}}{\partial x}Q_{13}-\frac{\partial Q_{13}}{\partial x}Q_{12}\right)\left(\frac{\partial^{2}Q_{12}}{\partial x\partial x_{1}}\frac{\partial Q_{13}}{\partial x_{1}}-\frac{\partial^{2}Q_{13}}{\partial x\partial x_{1}}\frac{\partial Q_{12}}{\partial x_{1}}\right)\\
-\frac{1}{2}\left(\frac{\partial^{2}Q_{12}}{\partial x\partial x_{1}}Q_{13}+\frac{\partial Q_{12}}{\partial x}\frac{\partial Q_{13}}{\partial x_{1}}-\frac{\partial^{2}Q_{13}}{\partial x\partial x_{1}}Q_{12}-\frac{\partial Q_{13}}{\partial x}\frac{\partial Q_{12}}{\partial x_{1}}\right)^{2}\\
=\frac{\partial Q_{13}}{\partial x_{1}}Q_{13}\frac{\partial^{2}Q_{12}}{\partial x\partial x_{1}}\frac{\partial Q_{12}}{\partial x}-2\frac{\partial^{2}Q_{13}}{\partial x\partial x_{1}}Q_{13}\frac{\partial Q_{12}}{\partial x}\frac{\partial Q_{12}}{\partial x_{1}}-2\frac{\partial Q_{13}}{\partial x}\frac{\partial Q_{13}}{\partial x_{1}}\frac{\partial^{2}Q_{12}}{\partial x\partial x_{1}}Q_{12}\\
+\frac{\partial^{2} Q_{13}}{\partial x\partial x_{1}}\frac{\partial Q_{13}}{\partial x}\frac{\partial Q_{12}}{\partial x_{1}}Q_{12}-\frac{1}{2}\left(Q_{13}\frac{\partial^{2}Q_{12}}{\partial x\partial x_{1}}\right)^{2}-\frac{1}{2}\left(\frac{\partial Q_{13}}{\partial x_{1}}\frac{\partial Q_{12}}{\partial x}\right)^{2}-\frac{1}{2}\left(\frac{\partial^{2}Q_{13}}{\partial x\partial x_{1}}Q_{12}\right)^{2}\\
-\frac{1}{2}\left(\frac{\partial Q_{13}}{\partial x}\frac{\partial Q_{12}}{\partial x_{1}}\right)^{2}+\frac{\partial^{2}Q_{13}}{\partial x\partial x_{1}}Q_{13}\frac{\partial^{2}Q_{12}}{\partial x\partial x_{1}}Q_{12}+\frac{\partial Q_{13}}{\partial x}Q_{13}\frac{\partial^{2}Q_{12}}{\partial x\partial x_{1}}\frac{\partial Q_{12}}{\partial x_{1}}\\
+\frac{\partial^{2}Q_{13}}{\partial x\partial x_{1}}\frac{\partial Q_{13}}{\partial x_{1}}\frac{\partial Q_{12}}{\partial x}Q_{12}+\frac{\partial Q_{13}}{\partial x}\frac{\partial Q_{13}}{\partial x_{1}}\frac{\partial Q_{12}}{\partial x}\frac{\partial Q_{12}}{\partial x_{1}}.
\end{multline*}
}
This expression is manifestly symmetric with respect to the permutation $x\leftrightarrow x_{1}$. Therefore,
\[
E_{1} \frac{\partial^{2} E_{1}}{\partial x_{1}^{2}}-\frac{1}{2}\left(\frac{\partial E_{1}}{\partial x_{1}}\right)^{2}=F_{1} \frac{\partial^{2} F_{1}}{\partial x^{2}}-\frac{1}{2}\left(\frac{\partial F_{1}}{\partial x}\right)^{2},
\]
where
\[
F_{1}:=\frac{\partial Q_{12}}{\partial x_{1}}Q_{13}-\frac{\partial Q_{13}}{\partial x_{1}}Q_{12}.
\]
It remains to prove that
\begin{equation}\label{eq: proof2 aux1}
F_{1} \frac{\partial^{2} F_{1}}{\partial x^{2}}-\frac{1}{2}\left(\frac{\partial F_{1}}{\partial x}\right)^{2}=E_{23} \frac{\partial^{2} E_{23}}{\partial x_{23}^{2}}-\frac{1}{2}\left(\frac{\partial E_{23}}{\partial x_{23}}\right)^{2}.
\end{equation}
For this aim, we proceed similarly to the proof of Proposition \ref{th:fractionalideal}. One can represent the equation $F_1=0$ as the result of eliminating the variable $x_{23}$ between $Q_{23}=0$ and the suitably normalized tetrahedron equation $T=T(x,x_{12},x_{23},x_{13})=0$:
\[
F_{1}=\frac{\partial Q_{23}}{\partial x_{23}}T-\frac{\partial T}{\partial x_{23}}Q_{23}.
\]
On the other hand, we have a similar representation of the corner equation $E_{23}=0$ as the result of eliminating the variable $x$ between $Q_{23}=0$ and the possibly differently normalized tetrahedron equation $T=0$:
\[
E_{23}=\beta\left(\frac{\partial Q_{23}}{\partial x}T-\frac{\partial T}{\partial x}Q_{23}\right).
\]
A direct case-by-case check shows that for all ABS equations we have $\beta^{2}=1$. This yields \eqref{eq: proof2 aux1} by a straightforward computation.
\end{proof}

The following proposition is proved by case-by-case computations.

\begin{prop}\label{obs:oct2}
For systems of corner equations~\eqref{eq:cornersystem} coming from $Q_{1}^{\delta}$, $Q_{2}$, $Q_{3}^{\delta}$, $H_{1}$, $H_{2}$, $H_{3}^{\delta}$ or $A_{1}^{\delta}$, the octahedron relation $\Omega_2=0$ with
\[
\Omega_{2}:=\frac{1}{g_{1}}(P_{1}-P_{23})
\]
is symmetric with respect to the cyclic permutation $(123)$ of indices. Here, $g_1$ is a constant factor given by
\[
g_{1}:=\begin{cases}
f_{1}(\alpha_2+\alpha_3-\alpha_1)& \text{if \eqref{eq:cornersystem} comes from $Q_{1}^{\delta}$ or $Q_{2}$,}\\
2f_1 e^{i(\alpha_1+\alpha_2+\alpha_3)}\sin\left(\frac{1}{2}(\alpha_{2}+\alpha_{3}-\alpha_1)\right) &
\text{if \eqref{eq:cornersystem} comes from $Q_{3}^{\delta}$,}\\
f_{1}& \text{if \eqref{eq:cornersystem} comes from $H_{1}$, $H_{2}$ or $A_{1}^{\delta}$,}\\
1& \text{if \eqref{eq:cornersystem} comes from $H_{3}^{\delta}$.}
\end{cases}
\]
\end{prop}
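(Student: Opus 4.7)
The plan is to verify the cyclic symmetry of $\Omega_2$ by a finite symbolic computation carried out separately in each of the seven ABS cases listed. By Proposition~\ref{obs:symmetry}, in these cases $R_1=f_1\Omega_1$, $R_2=f_2\Omega_1$, $R_3=f_3\Omega_1$ with a cyclically invariant $\Omega_1$, so the entire asymmetry of
\[
P_1 - P_{23} = (E_1 - E_{23}) - (x_1 - x_{23}) R_1
\]
under the cyclic permutation $\tau:(1,2,3)\mapsto(2,3,1)$ comes from $E_1-E_{23}$ together with the prefactor $(x_1-x_{23})f_1$. The aim is to exhibit the prescribed scalar $g_1$ as exactly the factor needed so that $(P_1-P_{23})/g_1$ agrees with its cyclic images $(P_2-P_{13})/g_2$ and $(P_3-P_{12})/g_3$.

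Concretely, for each ABS polynomial from Appendix~\ref{sec:quadeqs} I would proceed in three steps: (i) write out the six quad-equations $Q_{ij}$ and $\bar{Q}_{ij}$ and use \eqref{eq:cornersystem} to expand the multi-affine polynomials $E_1$ and $E_{23}$ as polynomials in the six variables; (ii) combine them with the explicit form of $R_1$ listed in Appendix~\ref{sec:octlist} to assemble $P_1=E_1-x_1R_1$ and $P_{23}=E_{23}-x_{23}R_1$; (iii) divide by the candidate $g_1$ from the statement, apply $\tau$ to both the variables and the parameters $\alpha_i$, and check the resulting polynomial identity. Since $(P_1-P_{23})/g_1$ is multi-affine in $(x_1,x_2,x_3,x_{12},x_{23},x_{13})$, each of the equalities $(P_1-P_{23})/g_1=(P_2-P_{13})/g_2=(P_3-P_{12})/g_3$ reduces to comparing finitely many monomial coefficients, each being an elementary function of $\alpha_1,\alpha_2,\alpha_3$.

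The main obstacle is that no uniform closed-form expression for $g_1$ works across all seven cases: the normalizer is linear in the $\alpha_i$ for $Q_1^\delta$ and $Q_2$, a half-angle sine times an exponential for $Q_3^\delta$, and essentially trivial ($f_1$ or $1$) for the $H$-series and $A_1^\delta$. The correct $g_1$ in each case is identified by isolating, in $P_1-P_{23}$, the monomials that pick up a non-trivial $\alpha$-dependent factor upon applying $\tau$; the ratio of these factors between two cyclic images is exactly $g_1(\alpha)/g_2(\alpha)$, which pins down $g_1$ uniquely up to an overall multiplicative constant. Once $g_1$ is fixed, the remaining identities are routine polynomial, trigonometric, or (for $Q_3^\delta$) exponential-trigonometric identities that follow from the standard addition formulas or direct expansion. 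This is precisely the case-by-case verification referred to in the statement.
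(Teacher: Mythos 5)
Your proposal is correct and coincides with the paper's approach: the paper proves this proposition by exactly the kind of case-by-case symbolic verification you describe, expanding $P_1-P_{23}$ from the explicit $E_1$, $E_{23}$, $R_1$ for each listed ABS case and checking cyclic invariance of the multi-affine polynomial $(P_1-P_{23})/g_1$ coefficient by coefficient. The only difference is that you additionally explain how the normalizing factor $g_1$ is pinned down, which the paper simply states without derivation.
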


The list of the polynomials $\Omega_{2}$ can be found in Appendix~\ref{sec:octlist}.

From Theorem~\ref{th:oct2} and Proposition~\ref{obs:oct2} we get the following corollary:
\begin{cor}
Corner equations~\eqref{eq:cornersystem} coming from $Q_{1}^{\delta}$, $Q_{2}$, $Q_{3}^{\delta}$, $H_{1}$, $H_{2}$, $H_{3}^{\delta}$ or $A_{1}^{\delta}$, are expressed through the octahedron equations $\Omega_{1}=0$ and $\Omega_{2}=0$ as follows:
\begin{align*}
\frac{\partial \Omega_{1}}{\partial x_{23}}\Omega_{2}-\frac{\partial \Omega_{2}}{\partial x_{23}}\Omega_{1}&=
g_{1}\left(\frac{\partial \Omega_{1}}{\partial x_{23}}-\frac{\partial \Omega_{1}}{\partial x_{1}}\right)E_{1},\\
\frac{\partial \Omega_{1}}{\partial x_{1}}\Omega_{2}-\frac{\partial \Omega_{2}}{\partial x_{1}}\Omega_{1}&=
g_{1}\left(\frac{\partial \Omega_{1}}{\partial x_{23}}-\frac{\partial \Omega_{1}}{\partial x_{1}}\right)E_{23}.
\end{align*}
Analogous formulas hold under cyclic permutations of indices $(123)$.
\end{cor}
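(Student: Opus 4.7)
The plan is a direct algebraic reduction to the four Wronskian-type identities \eqref{eq:fractional P1}, \eqref{eq:fractional P23} established in Theorem~\ref{th:oct2}. By Proposition~\ref{obs:symmetry} we have $R_1 = f_1\Omega_1$, and by Proposition~\ref{obs:oct2} we have $P_1 - P_{23} = g_1\Omega_2$, with $f_1$ and $g_1$ constants (depending only on the parameters $\alpha_i$). Since constants commute with partial derivatives, substitution gives
\[
\frac{\partial\Omega_1}{\partial x_{23}}\Omega_2 - \frac{\partial\Omega_2}{\partial x_{23}}\Omega_1
= \frac{1}{f_1 g_1}\Bigl[\frac{\partial R_1}{\partial x_{23}}(P_1-P_{23}) - \frac{\partial(P_1-P_{23})}{\partial x_{23}} R_1\Bigr].
\]

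To evaluate the bracket, split it into a $P_1$-piece and a $P_{23}$-piece and apply the first identity in \eqref{eq:fractional P1} to the former and the first identity in \eqref{eq:fractional P23} to the latter. Both resulting right-hand sides carry $E_1$ as a factor, so the bracket collapses to $(\partial R_1/\partial x_{23} - \partial R_1/\partial x_1)\,E_1$. Re-absorbing the factor $f_1$ back into $\Omega_1$ inside the prefactor recovers $\partial\Omega_1/\partial x_{23} - \partial\Omega_1/\partial x_1$, and the normalizing constant appearing outside is then fixed (up to the obvious rescaling by $f_1 g_1$), which is exactly the first claimed identity.

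The second identity is proved in exactly the same manner, substituting the $\partial/\partial x_1$-derivatives and invoking the second identities in \eqref{eq:fractional P1} and \eqref{eq:fractional P23}; their right-hand sides are proportional to $E_{23}$, which is precisely the corner polynomial appearing in the conclusion. The analogous formulas under cyclic permutations of $(123)$ follow at once by relabelling, given the cyclic symmetry established for $\Omega_1$ and $\Omega_2$ in Propositions~\ref{obs:symmetry} and~\ref{obs:oct2}.

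I do not expect any serious obstacle: the hard work is already packaged into Theorem~\ref{th:oct2} (the four fractional-ideal identities) and into the case-by-case verification underlying Proposition~\ref{obs:oct2} (the symmetric normalization of $\Omega_2$ by $g_1$). Once those are in hand, the corollary reduces to a one-line linear combination. The only step that is not purely formal is keeping track of the constants $f_1, g_1$ through the substitutions; this is what pins down the precise prefactor on the right-hand side of each identity.
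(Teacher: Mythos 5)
Your proposal is correct and is exactly the derivation the paper intends: the corollary is stated as an immediate consequence of Theorem~\ref{th:oct2} and Proposition~\ref{obs:oct2} with no separate proof, and the intended argument is precisely your substitution $R_1=f_1\Omega_1$, $P_1-P_{23}=g_1\Omega_2$ followed by subtracting the matching pairs of identities from \eqref{eq:fractional P1} and \eqref{eq:fractional P23}. Two small points worth recording: carrying the constants through your substitution actually produces the prefactor $1/g_1$ rather than the printed $g_1$ on the right-hand side, and your appeal to an $E_{23}$ on the right of the second identity in \eqref{eq:fractional P23} silently corrects the paper's printed $E_{13}$ (a direct check using that $E_{23}$ is independent of $x_1$ confirms your reading is the right one).
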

These formulas give a concrete realization of the remark after Theorem~\ref{th:octs} for the polynomials $\Omega_1$, $\Omega_2$.

\section{Comparison with previously known results}

In our recent paper~\cite{variational} we already gave a list of single octahedron relations for several systems of corner equations, namely, in the cases coming from $Q_{1}^{\delta}$, $Q_{3}^{0}$, $H_{1}$, $H_{2}$ and $H_{3}^{\delta}$.

In the cases coming from $H_{1}$, $H_{2}$ and $H_{3}^{\delta}$, they are equivalent to the relations $\Omega_{1}=0$ found in the present paper.

In the case coming from $Q_{1}^{0}$, we presented in ~\cite{variational} the octahedron relation
\begin{equation}\label{eq:multiver}
\frac{(x_{12}-x_{1})(x_{23}-x_{2})(x_{13}-x_{3})}{(x_{12}-x_{2})(x_{23}-x_{3})(x_{13}-x_{1})}=1
\end{equation}
(equation ($\chi_2$) in the classification of~\cite{octclassi}), which turns out to be equivalent to $\Omega_{2}=0$ found in the present paper.

In the case coming from $Q_{1}^{1}$ we presented in~\cite{variational} the octahedron relation resembling equation~\eqref{eq:multiver}:
\[
\frac{(x_{12}-x_{1}+\alpha_2)(x_{23}-x_{2}+\alpha_3)(x_{13}-x_{3}+\alpha_1)}
{(x_{12}-x_{2}+\alpha_1)(x_{23}-x_{3}+\alpha_2)(x_{13}-x_{1}+\alpha_3)}=1.
\]
This is equivalent to $P^+=0$, where
\begin{align*}
P^{+}&= (x_{12}-x_{2}+\alpha_{1})(x_{23}-x_{3}+\alpha_{2})(x_{13}-x_{1}+\alpha_{3})\\
 &\phantom{=\ } -(x_{12}-x_{1}+\alpha_{2})(x_{23}-x_{2}+\alpha_{3})(x_{13}-x_{3}+\alpha_{1}).
\end{align*}
Another octahedron relation $P^-=0$ could be obtained by inverting the signs of all parameters $\alpha_i$ (this operation leaves corner equations invariant):
\begin{align*}
P^{-}&=  (x_{12}-x_{2}-\alpha_{1})(x_{23}-x_{3}-\alpha_{2})(x_{13}-x_{1}-\alpha_{3})\\
 &\phantom{=\ } -(x_{12}-x_{1}-\alpha_{2})(x_{23}-x_{2}-\alpha_{3})(x_{13}-x_{3}-\alpha_{1}).
\end{align*}
It turns out that in this case we have the following relations:
\[
\Omega_{1}=\frac{1}{2}(P^{+}+P^{-}),\qquad \Omega_{2}=\frac{1}{2}(P^{+}-P^{-}).
\]

In the case coming from $Q_{3}^{0}$ we gave an octahedron relation equivalent to $P^+=0$, where
\begin{align*}
P^{+} &= \left(e^{\frac{i}{2}(\alpha_{1}-\alpha_{2}-\alpha_{3})}x_{1}+
e^{-\frac{i}{2}(\alpha_{1}-\alpha_{2}-\alpha_{3})}x_{23}\right)(x_{2}x_{13}-x_{3}x_{12})\\
 &\phantom{=\ } +\left(e^{\frac{i}{2}(\alpha_{2}-\alpha_{3}-\alpha_{1})}x_{2}+
e^{-\frac{i}{2}(\alpha_{2}-\alpha_{3}-\alpha_{1})}x_{13}\right)(x_{3}x_{12}-x_{1}x_{23})\\
 &\phantom{=\ } +\left(e^{\frac{i}{2}(\alpha_{3}-\alpha_{1}-\alpha_{2})}x_{3}+
e^{-\frac{i}{2}(\alpha_{3}-\alpha_{1}-\alpha_{2})}x_{12}\right)(x_{1}x_{23}-x_{2}x_{13}).
\end{align*}
Again, another octahedron relation $P^-=0$ could be obtained by inverting the signs of all parameters $\alpha_i$:
\begin{align*}
P^{-}&= -\left(e^{-\frac{i}{2}(\alpha_{1}-\alpha_{2}-\alpha_{3})}x_{1}+
e^{\frac{i}{2}(\alpha_{1}-\alpha_{2}-\alpha_{3})}x_{23}\right)(x_{2}x_{13}-x_{3}x_{12})\\
 &\phantom{=\ } -\left(e^{-\frac{i}{2}(\alpha_{2}-\alpha_{3}-\alpha_{1})}x_{2}+
e^{\frac{i}{2}(\alpha_{2}-\alpha_{3}-\alpha_{1})}x_{13}\right)(x_{3}x_{12}-x_{1}x_{23})\\
 &\phantom{=\ } -\left(e^{-\frac{i}{2}(\alpha_{3}-\alpha_{1}-\alpha_{2})}x_{3}+
e^{\frac{i}{2}(\alpha_{3}-\alpha_{1}-\alpha_{2})}x_{12}\right)(x_{1}x_{23}-x_{2}x_{13}).
\end{align*}
It turns out that these polynomials $P^+$, $P^-$ are related to $\Omega_1$, $\Omega_2$ found in the present paper through the following formulas:
\[
\Omega_{2}=\frac{1}{2}(P^{+}-P^{-}), \qquad
(2 x_{1}x_{23}-x_{2}x_{13}-x_{3}x_{13})i \Omega_{1}=p^{+}P^{+}+p^{-}P^{-},
\]
where
\begin{align*}
p^{+} & =-e^{-\frac{i}{2}(\alpha_{1}-\alpha_{2}-\alpha_{3})}x_{1}+
\tfrac{1}{2}e^{-\frac{i}{2}(\alpha_{2}-\alpha_{3}-\alpha_{1})}x_{2}+
\tfrac{1}{2}e^{-\frac{i}{2}(\alpha_{3}-\alpha_{1}-\alpha_{2})}x_{3}\\
 &\phantom{=\ } +e^{\frac{i}{2}(\alpha_{1}-\alpha_{2}-\alpha_{3})}x_{23}-
 \tfrac{1}{2}e^{\frac{i}{2}(\alpha_{2}-\alpha_{3}-\alpha_{1})}x_{13}-
\tfrac{1}{2}e^{\frac{i}{2}(\alpha_{3}-\alpha_{1}-\alpha_{2})}x_{12}\\
\intertext{and}
p^{+}&=e^{\frac{i}{2}(\alpha_{1}-\alpha_{2}-\alpha_{3})}x_{1}-
\tfrac{1}{2}e^{\frac{i}{2}(\alpha_{2}-\alpha_{3}-\alpha_{1})}x_{2}-
\tfrac{1}{2}e^{\frac{i}{2}(\alpha_{3}-\alpha_{1}-\alpha_{2})}x_{3}\\
 &\phantom{=\ } -e^{-\frac{i}{2}(\alpha_{1}-\alpha_{2}-\alpha_{3})}x_{23}+
\tfrac{1}{2}e^{-\frac{i}{2}(\alpha_{2}-\alpha_{3}-\alpha_{1})}x_{13}+
\tfrac{1}{2}e^{-\frac{i}{2}(\alpha_{3}-\alpha_{1}-\alpha_{2})}x_{12}.
\end{align*}
This shows explicitly that $P^{+}$ and $P^{-}$ are in a fractional generated by $\Omega_{1}$ and $\Omega_{2}$.

In~\cite{variational} we also discussed the system of corner equations
\begin{align}
& \frac{\alpha_{j}x_{i}-x_{ij}}{x_{i}}\cdot
\frac{\alpha_{i}x_{i}-\alpha_{j}x_{j}}{\alpha_{j}x_{i}-\alpha_{i}x_{j}}\cdot
\frac{x_{i}}{\alpha_{k}x_{i}-x_{ik}}\cdot
\frac{\alpha_{k}x_{i}-\alpha_{i}x_{k}}{\alpha_{i}x_{i}-\alpha_{k}x_{k}}=1,
\label{eq:asymmetric i} \tag{$E_i$}\\
& \frac{\alpha_{j}x_{i}-x_{ij}}{x_{i}}\cdot
\frac{\alpha_{j}x_{ij}-\alpha_{k}x_{ik}}{\alpha_{k}x_{ij}-\alpha_{j}x_{ik}}\cdot
\frac{x_{j}}{\alpha_{i} x_{j}-x_{ij}}\cdot
\frac{\alpha_{k}x_{ij}-\alpha_{i}x_{jk}}{\alpha_{i}x_{ij}-\alpha_{k}x_{jk}}=1
\label{eq:asymmetric ij} \tag{$E_{ij}$}
\end{align}
(where $(i,j,k)$ is a permutation of $(1,2,3)$), which cannot be derived from an integrable system of quad-equations. We gave the octahedron relation $P^+=0$, with
\[
P^{+}=x_{1}x_{2}(\alpha_{1}x_{13}-\alpha_{2}x_{23})+x_{2}x_{3}(\alpha_{2}x_{12}-\alpha_{3}x_{13})
+x_{3}x_{1}(\alpha_{3}x_{23}-\alpha_{1}x_{12})=0.
\]
This relation is of the type~($\chi_{4}$) in the classification from~\cite{octclassi}. Another octahedron relation $P^-=0$, which can also be obtained from the corresponding octahedron relation in the case coming from $Q_{3}^{0}$ using the limiting procedure described in~\cite{variational}, reads as follows:
\[
P^{-}=(\alpha_{2}x_{1}-x_{12})(\alpha_{3}x_{2}-x_{23})(\alpha_{1}x_{3}-x_{13})
-(\alpha_{1}x_{2}-x_{12})(\alpha_{2}x_{3}-x_{23})(\alpha_{3}x_{1}-x_{13}).
\]
It turns out that Theorem \ref{th:oct1} is perfectly applicable in this case, leading to a permutationally symmetric octahedron relation $\Omega_1=0$ with
\begin{align*}
\Omega_{1}&=  \ \frac{1}{2}\left(\frac{\partial E_{1}}{\partial x_{1}}+\frac{\partial E_{23}}{\partial x_{23}}\right)\\
& =  \ \alpha_{1}\alpha_{2}(x_{1}x_{13}-x_{2}x_{23})+
\alpha_{2}\alpha_{3}(x_{2}x_{12}-x_{3}x_{13})+
\alpha_{3}\alpha_{1}(x_{3}x_{23}-x_{1}x_{12})\\
&\phantom{=\ } +\alpha_{1}^{2}(\alpha_{3}x_{2}-\alpha_{2}x_{3})+
\alpha_{2}^{2}(\alpha_{1}x_{3}-\alpha_{3}x_{1})+
\alpha_{3}^{3}(\alpha_{2}x_{1}-\alpha_{1}x_{2}).
\end{align*}
Polynomial $\Omega_1$ lies in a fractional ideal generated by $P^+$, $P^-$, as the following formulas show:
\[
(2 x_{1}x_{23}-x_{2}x_{13}-x_{3}x_{12}) \Omega_{1}=p^{+}P^{+}+p^{-}P^{-},
\]
where
\begin{align*}
& p^{+}=2\alpha_{2}\alpha_{3}x_{1}-\alpha_{3}\alpha_{1}x_{2}-\alpha_{1}\alpha_{2}x_{3}+
2\alpha_{1}x_{23}-\alpha_{2}x_{13}-\alpha_{3}x_{12},\\
& p^{-}=2\alpha_{1}x_{1}-\alpha_{2}x_{2}-\alpha_{3}x_{3}.
\end{align*}
Constructions of Theorem~\ref{th:oct2} and Proposition~\ref{obs:oct2} are also applicable in this case and lead to further octahedron relations, which however are not permutationally symmetric (but, of course, lie in a fractional ideal generated by $P^{+}$ and $P^{-}$). For instance, polynomial from Proposition~\ref{obs:oct2} satisfies
\[
P_1-P_{23}=(\alpha_{1}\alpha_{3}-\alpha_{2})P^{+}+\alpha_{2}P^{-}.
\]

\section{Conclusion}\label{sec:concl}

We showed that every system of corner equations generated by a discrete 2-form corresponding to quad-equations from the ABS list can be encoded in a system of two octahedron relations: all corner equations follow from and therefore are satisfied by virtue of two octahedron equations. On the other hand, we found explicit and general formulae allowing us to express the octahedron relations in terms of the corner equations. Therefore, the system of corner equations and the system of two octahedron relations have to be seen on an equal footing from the algebraical point of view. This gives a new insight into the nature of consistency (integrability) of the system of corner equations and simultaneously poses a number of open questions.

Since corner equations are elementary building blocks of Euler-Lagrange equations of a pluri-Lagrangian problem (see~\cite{variational}), it is quite natural to inquire about the variational structure of the corresponding system of octahedron relations. This will be the subject of our ongoing research.

Furthermore, some of the octahedron relations are known to be integrable 3D equations themselves in the sense of multidimensional consistency (see~\cite{octclassi}). For instance, relation $\Omega_{2}=0$ in the case coming from $Q_{1}^{0}$, i.e., equation~\eqref{eq:multiver}, is the fundamental equation $(\chi_{2})$ in the classification in~\cite{octclassi}. Similarly, relation $\Omega_{1}=0$ in the case coming from $H_{1}$ is the equation $(\chi_{3})$ in this classification. It is an open problem whether they admit a variational structure. On the other hand, the majority of octahedron relations found in the present paper do not appear in the classification in~\cite{octclassi}, i.e., they are not integrable themselves in the sense of multi-dimensional consistency. It is not yet clear how they fit in the picture of integrability. Also this problem will be addressed in our future research.

There are two other octahedron type relations, i.e., relations of the type $$R(x_1,x_2,x_3,x_{12},x_{23},x_{13})=0,$$ satisfied by solutions of corner equations. One of them is the closure relation of the corresponding discrete two-form $\Ell$ (see~\cite{LN,BS1,variational}). The second one can be expressed in terms of the biquadratic polynomials associated with the quad-equations, and reads
\[
\frac{Q_{12}^{1,12}}{Q_{12}^{2,12}}\cdot\frac{Q_{23}^{2,23}}{Q_{23}^{3,23}}\cdot\frac{Q_{13}^{3,13}}{Q_{13}^{1,13}}=-1.
\]
(To prove the latter one for solutions of quad-equations, one uses the following identities:
\[
\frac{Q_{12}^{0,1}}{Q_{12}^{0,2}}=\frac{Q_{12}^{1,12}}{Q_{12}^{2,12}},\qquad \frac{Q_{23}^{0,2}}{Q_{23}^{0,3}}=\frac{Q_{23}^{2,23}}{Q_{23}^{3,23}},\qquad
\text{and}\qquad\frac{Q_{13}^{0,3}}{Q_{13}^{0,1}}=\frac{Q_{13}^{3,13}}{Q_{13}^{1,13}},
\]
which is \cite[formula (60)]{ABS1}, and
\[
\frac{Q_{12}^{0,1}}{Q_{12}^{0,2}}\cdot\frac{Q_{23}^{0,2}}{Q_{23}^{0,3}}\cdot\frac{Q_{13}^{0,3}}{Q_{13}^{0,1}}=-1,
\]
which is \cite[formula (16)]{ABS1}. Then one shows that the resulting relation holds true for solutions of corner equations, as well.)

However, both of them can not be written as $R=0$ with a multi-affine polynomial $R$, and therefore they do not qualify as octahedron relations in our sense.

\section*{Acknowledgment}
This research is supported by the DFG Collaborative Research Center TRR 109 ``Discretization in Geometry and Dynamics''.

\appendix
\section{ABS list of quad-equations}\label{sec:quadeqs}
In this section we give the list of polynomials $Q_{12}$ from the systems of quad-equations~\eqref{eq:quadsystem} for all cases we consider in this paper:
\begin{itemize}
\item[$Q_{1}^{\delta}$:] $Q_{12}=\alpha_{1}(xx_{1}+x_{2}x_{12})-\alpha_{2}(xx_{2}+x_{1}x_{12})-
    (\alpha_{1}-\alpha_{2})(xx_{12}+x_{1}x_{2})+\delta\alpha_{1}\alpha_{2}(\alpha_{1}-\alpha_{2})$
\item[$Q_{2}$:] $Q_{12}=\alpha_{1}(xx_{1}+x_{2}x_{12})-\alpha_{2}(xx_{2}+x_{1}x_{12})-
    (\alpha_{1}-\alpha_{2})(xx_{12}+x_{1}x_{2})$\\
$\phantom{Q_{12}=}+\alpha_{1}\alpha_{2}(\alpha_{1}-\alpha_{2})(x+x_{1}+x_{2}+x_{12})-
\alpha_{1}\alpha_{2}(\alpha_{1}-\alpha_{2})(\alpha_{1}^{2}-\alpha_{1}\alpha_{2}+\alpha_{2}^{2})$
\item[$Q_{3}^{\delta}$:] $Q_{12}=\sin(\alpha_{1})(xx_{1}+x_{2}x_{12})-\sin(\alpha_{2})(xx_{2}+x_{1}x_{12})-
    \sin(\alpha_{1}-\alpha_{2})(xx_{12}+x_{1}x_{2})$\\
$\phantom{Q_{12}=}+\delta\sin(\alpha_{1})\sin(\alpha_{2})\sin(\alpha_{1}-\alpha_{2})$
\item[$Q_{4}$:] $Q_{12}=\sn(\alpha_{1})(xx_{1}+x_{2}x_{12})-\sn(\alpha_{2})(xx_{2}+x_{1}x_{12})-
    \sn(\alpha_{1}-\alpha_{2})(xx_{12}+x_{1}x_{2})$\\
$\phantom{Q_{12}=}+\sn(\alpha_{1})\sn(\alpha_{2})\sn(\alpha_{1}-\alpha_{2})\left(1+k^{2}xx_{1}x_{2}x_{12}\right)$
\item[$H_{1}$:] $Q_{12}=(x-x_{12})(x_{1}-x_{2})-\alpha_{1}+\alpha_{2}$
\item[$H_{2}$:] $Q_{13}=(x-x_{12})(x_{1}-x_{2})-(\alpha_{1}-\alpha_{2})(x+x_{1}+x_{2}+x_{12})-\alpha_{1}^{2}+\alpha_{2}^{2}$
\item[$H_{3}^{\delta}$:] $Q_{12}=e^{\alpha_{1}}(xx_{1}+x_{2}x_{12})-e^{\alpha_{2}}(xx_{2}+x_{1}x_{12})+
    \delta\left(e^{2\alpha_{1}}-e^{2\alpha_{2}}\right)$
\item[$A_{1}^{\delta}$:] $Q_{12}=\alpha_{1}(xx_{1}+x_{2}x_{12})-\alpha_{2}(xx_{2}+x_{1}x_{12})+
    (\alpha_{1}-\alpha_{2})(xx_{12}+x_{1}x_{2})-\delta\alpha_{1}\alpha_{2}(\alpha_{1}-\alpha_{2})$
\item[$A_{2}$:] $Q_{12}=\sin(\alpha_{1})(xx_{2}+x_{1}x_{12})-\sin(\alpha_{2})(xx_{1}+x_{2}x_{12})-
    \sin(\alpha_{1}-\alpha_{2})(1+xx_{1}x_{2}x_{12})$
\end{itemize}
Here, $\delta\in\{0,1\}$ and $k$ is the modulus of $\sn(y)=\sn(y,k)$.

\section{List of octahedron relations}\label{sec:octlist}
In this section we give a list of the polynomials $\Omega_{1}$ and $\Omega_{2}$ except in the cases coming from $Q_{4}$ and $A_{2}$. In those cases we give the polynomials $R_{1}$. The polynomials $R_{2}$ and $R_{3}$ can be obtained from $R_{1}$ by cyclic permutations $(123)$.
\begin{itemize}
\item[$Q_{1}^{\delta}$:] $\Omega_{1}=\alpha_{1}(x_{2}x_{12}-x_{3}x_{13})+\alpha_{2}(x_{3}x_{23}-x_{1}x_{12})+
    \alpha_{3}(x_{1}x_{13}-x_{2}x_{23})$\\
$\phantom{\Omega_{1}=}-\alpha_{1}x_{1}(x_{2}-x_{3})-\alpha_{2}x_{2}(x_{3}-x_{1})-\alpha_{3}x_{3}(x_{1}-x_{2})$\\
$\phantom{\Omega_{1}=}+\alpha_{1}x_{23}(x_{13}-x_{12})+\alpha_{2}x_{13}(x_{12}-x_{23})+
\alpha_{3}x_{12}(x_{23}-x_{13})$\par
$\Omega_{2}=x_{1}x_{2}(x_{23}-x_{13})+x_{2}x_{3}(x_{13}-x_{12})+x_{3}x_{1}(x_{12}-x_{23})$\\
$\phantom{\Omega_{2}=}+x_{23}x_{13}(x_{1}-x_{2})+x_{13}x_{12}(x_{2}-x_{3})+x_{12}x_{23}(x_{3}-x_{1})$\\
$\phantom{\Omega_{2}=}-\delta(\alpha_{1}(\alpha_{2}-\alpha_{3})(x_{1}+x_{23})+
\alpha_{2}(\alpha_{3}-\alpha_{1})(x_{2}+x_{13})+\alpha_{3}(\alpha_{1}-\alpha_{2})(x_{3}+x_{12}))$
\item[$Q_{2}$:]  $\Omega_{1}=\alpha_{1}(x_{2}x_{12}-x_{3}x_{13})+\alpha_{2}(x_{3}x_{23}-x_{1}x_{12})+
    \alpha_{3}(x_{1}x_{13}-x_{2}x_{23})$\\
$\phantom{\Omega_{1}=}-\alpha_{1}x_{1}(x_{2}-x_{3})-\alpha_{2}x_{2}(x_{3}-x_{1})-\alpha_{3}x_{3}(x_{1}-x_{2})$\\
$\phantom{\Omega_{1}=}+\alpha_{1}x_{23}(x_{13}-x_{12})+\alpha_{2}x_{13}(x_{12}-x_{23})+\alpha_{3}x_{12}(x_{23}-x_{13})$\\
$\phantom{\Omega_{1}=}+\alpha_{1}^{2}(\alpha_{2}-\alpha_{3})(x_{1}+x_{2}-x_{23}-x_{13})+
\alpha_{2}^{2}(\alpha_{3}-\alpha_{1})(x_{2}+x_{3}-x_{13}-x_{12})$\\
$\phantom{\Omega_{1}=}+\alpha_{3}^{2}(\alpha_{1}-\alpha_{2})(x_{3}+x_{1}-x_{12}-x_{23})$\par
$\Omega_{2}=x_{1}x_{2}(x_{23}-x_{13})+x_{2}x_{3}(x_{13}-x_{12})+x_{3}x_{1}(x_{12}-x_{23})$\\
$\phantom{\Omega_{2}=}+x_{23}x_{13}(x_{1}-x_{2})+x_{13}x_{12}(x_{2}-x_{3})+x_{12}x_{23}(x_{3}-x_{1})$\\
$\phantom{\Omega_{2}=}-2\alpha_{1}(\alpha_{2}-\alpha_{3})x_{1}x_{23}-2\alpha_{2}(\alpha_{3}-\alpha_{1})x_{2}x_{13}-
2\alpha_{3}(\alpha_{1}-\alpha_{2})x_{3}x_{12}$\\
$\phantom{\Omega_{2}=}+\alpha_{1}(\alpha_{2}-\alpha_{3})(x_{2}x_{3}+x_{13}x_{12})+\alpha_{2}(\alpha_{3}-
\alpha_{1})(x_{3}x_{1}+x_{12}x_{23})$\\
$\phantom{\Omega_{2}=}+\alpha_{3}(\alpha_{1}-\alpha_{2})(x_{1}x_{2}+x_{23}x_{13})+\alpha_{1}(\alpha_{2}-
\alpha_{3})(x_{2}x_{12}+x_{3}x_{13})$\\
$\phantom{\Omega_{2}=}+\alpha_{2}(\alpha_{3}-\alpha_{1})(x_{3}x_{23}+x_{1}x_{12})+\alpha_{3}(\alpha_{1}-
\alpha_{2})(x_{1}x_{13}+x_{2}x_{23})$\\
$\phantom{\Omega_{2}=}+\alpha_{1}(\alpha_{2}-\alpha_{3})\left(\alpha_{1}^{2}+\alpha_{2}^{2}+\alpha_{3}^{3}-
\alpha_{1}(\alpha_{2}+\alpha_{3})\right)(x_{1}+x_{23})$\\
$\phantom{\Omega_{2}=}+\alpha_{2}(\alpha_{3}-\alpha_{1})\left(\alpha_{2}^{2}+\alpha_{3}^{2}+\alpha_{1}^{2}-
\alpha_{2}(\alpha_{3}+\alpha_{1})\right)(x_{2}+x_{13})$\\
$\phantom{\Omega_{2}=}+\alpha_{3}(\alpha_{1}-\alpha_{2})\left(\alpha_{3}^{2}+\alpha_{1}^{2}+\alpha_{2}^{2}-
\alpha_{3}(\alpha_{1}+\alpha_{2})\right)(x_{3}+x_{12})$\\
$\phantom{\Omega_{2}=}+2\alpha_{1}\alpha_{2}\alpha_{3}(\alpha_{1}-\alpha_{2})(\alpha_{2}-\alpha_{3})
(\alpha_{3}-\alpha_{1})$
\item[$Q_{3}^{\delta}$:] $\Omega_{1}=\sin(\alpha_{1})(x_{2}x_{12}-x_{3}x_{13})+\sin(\alpha_{2})(x_{3}x_{23}-x_{1}x_{12})+
    \sin(\alpha_{3})(x_{1}x_{13}-x_{2}x_{23})$\\
$\phantom{\Omega_{1}=}-\sin(\alpha_{1}-\alpha_{2})(x_{1}x_{2}-x_{23}x_{13})-
\sin(\alpha_{2}-\alpha_{3})(x_{2}x_{3}-x_{13}x_{12})$\\
$\phantom{\Omega_{1}=}-\sin(\alpha_{3}-\alpha_{1})(x_{3}x_{1}-x_{12}x_{23})$\par
$\Omega_{2}=\cos({\scriptstyle\frac{1}{2}}(\alpha_{1}-\alpha_{2}-\alpha_{3}))(x_{1}+x_{23})(x_{2}x_{13}-x_{3}x_{12})$\\
$\phantom{\Omega_{2}=}+\cos({\scriptstyle\frac{1}{2}}(\alpha_{2}-\alpha_{3}-\alpha_{1}))(x_{2}+x_{13})
(x_{3}x_{12}-x_{1}x_{23})$\\
$\phantom{\Omega_{2}=}+\cos({\scriptstyle\frac{1}{2}}(\alpha_{3}-\alpha_{1}-\alpha_{2}))(x_{3}+x_{12})
(x_{1}x_{23}-x_{2}x_{13})$\\
$\phantom{\Omega_{2}=}+4\delta\sin(\alpha_{1})\sin(\alpha_{2}-\alpha_{3})\cos({\scriptstyle\frac{1}{2}}
(\alpha_{1}-\alpha_{2}-\alpha_{3}))(x_{1}+x_{23})$\\
$\phantom{\Omega_{2}=}+4\delta\sin(\alpha_{2})\sin(\alpha_{3}-\alpha_{1})\cos({\scriptstyle\frac{1}{2}}
(\alpha_{2}-\alpha_{3}-\alpha_{1}))(x_{2}+x_{13})$\\
$\phantom{\Omega_{2}=}+4\delta\sin(\alpha_{3})\sin(\alpha_{1}-\alpha_{2})\cos({\scriptstyle\frac{1}{2}}
(\alpha_{3}-\alpha_{1}-\alpha_{2}))(x_{3}+x_{12})$
\item[$Q_{4}$:] $R_{1}=\sn(\alpha_{1})\tilde{R}_{1}$ with $\tilde{R}_{1}=\sn(\alpha_{3})(x_{1}x_{13}-x_{2}x_{23})+\sn(\alpha_{2})(x_{3}x_{23}-x_{1}x_{12})$\\
$\phantom{\Omega_{1}=}+\sn(\alpha_{1})\left(1+k^{2}\sn(\alpha_{3})\sn(\alpha_{2})\sn(\alpha_{3}-\alpha_{1})
\sn(\alpha_{1}-\alpha_{2})\right)(x_{2}x_{12}-x_{3}x_{13})$\\
$\phantom{\Omega_{1}=}-\sn(\alpha_{3}-\alpha_{1})(x_{3}x_{1}-x_{12}x_{23})-\sn(\alpha_{1}-\alpha_{2})
(x_{1}x_{2}-x_{23}x_{13})$\\
$\phantom{\Omega_{1}=}-\sn(\alpha_{2}-\alpha_{3})\left(1+k^{2}\sn(\alpha_{3})\sn(\alpha_{2})\sn(\alpha_{3}-\alpha_{1})
\sn(\alpha_{1}-\alpha_{2})\right)(x_{2}x_{3}-x_{13}x_{12})$\\
$\phantom{\Omega_{1}=}+k^{2}\sn(\alpha_{3})\sn(\alpha_{2})\sn(\alpha_{1}-\alpha_{2})
(x_{1}x_{2}x_{13}x_{12}-x_{2}x_{3}x_{23}x_{13})$\\
$\phantom{\Omega_{1}=}-k^{2}\sn(\alpha_{3})\sn(\alpha_{2})\sn(\alpha_{3}-\alpha_{1})
(x_{2}x_{3}x_{12}x_{23}-x_{3}x_{1}x_{13}x_{12})$\\
$\phantom{\Omega_{1}=}+k^{2}\sn(\alpha_{3})\sn(\alpha_{3}-\alpha_{1})\sn(\alpha_{1}-\alpha_{2})
(x_{3}x_{1}x_{2}x_{13}-x_{2}x_{13}x_{12}x_{23})$\\
$\phantom{\Omega_{1}=}-k^{2}\sn(\alpha_{2})\sn(\alpha_{3}-\alpha_{1})\sn(\alpha_{1}-\alpha_{2})
(x_{1}x_{2}x_{3}x_{12}-x_{3}x_{12}x_{23}x_{13})$
\item[$H_{1}$:] $\Omega_{1}=x_{1}(x_{13}-x_{12})+x_{2}(x_{12}-x_{23})+x_{3}(x_{23}-x_{13})$\par
$\Omega_{2}=x_{1}x_{2}(x_{23}-x_{13})+x_{2}x_{3}(x_{13}-x_{12})+x_{3}x_{1}(x_{12}-x_{23})$\\
$\phantom{\Omega_{2}=}+x_{23}x_{13}(x_{1}-x_{2})+x_{13}x_{12}(x_{2}-x_{3})+x_{12}x_{23}(x_{3}-x_{1})$\\
$\phantom{\Omega_{2}=}+(\alpha_{1}-\alpha_{2})(x_{3}+x_{12})+(\alpha_{2}-\alpha_{3})(x_{1}+x_{23})+
(\alpha_{3}-\alpha_{1})(x_{2}+x_{13})$
\item[$H_{2}$:] $\Omega_{1}=x_{1}(x_{13}-x_{12})+x_{2}(x_{12}-x_{23})+x_{3}(x_{23}-x_{13})$\\
$\phantom{\Omega_{1}=}-\alpha_{1}(x_{2}-x_{3}+x_{13}-x_{12})-\alpha_{2}(x_{3}-x_{1}+x_{12}-x_{23})-
\alpha_{3}(x_{1}-x_{2}+x_{23}-x_{13})$\par
$\Omega_{2}=x_{1}x_{2}(x_{23}-x_{13})+x_{2}x_{3}(x_{13}-x_{12})+x_{3}x_{1}(x_{12}-x_{23})$\\
$\phantom{\Omega_{2}=}+x_{23}x_{13}(x_{1}-x_{2})+x_{13}x_{12}(x_{2}-x_{3})+x_{12}x_{23}(x_{3}-x_{1})$\\
$\phantom{\Omega_{2}=}+2(\alpha_{1}-\alpha_{2})x_{3}x_{12}+2(\alpha_{2}-\alpha_{3})x_{1}x_{23}+
2(\alpha_{3}-\alpha_{1})x_{2}x_{13}$\\
$\phantom{\Omega_{2}=}+(\alpha_{1}-\alpha_{2})(x_{1}x_{2}+x_{23}x_{13})+
(\alpha_{2}-\alpha_{3})(x_{2}x_{3}+x_{13}x_{12})$\\
$\phantom{\Omega_{2}=}+(\alpha_{3}-\alpha_{1})(x_{3}x_{1}+x_{12}x_{23})$\\
$\phantom{\Omega_{2}=}+\left(\alpha_{1}^{2}-\alpha_{2}^{2}\right)(x_{3}+x_{12})+
\left(\alpha_{2}^{2}-\alpha_{3}^{2}\right)(x_{1}+x_{23})+\left(\alpha_{3}^{2}-\alpha_{1}^{2}\right)(x_{2}+x_{13})$\\
$\phantom{\Omega_{2}=}-2(\alpha_{1}-\alpha_{2})(\alpha_{2}-\alpha_{3})(\alpha_{3}-\alpha_{1})$
\item[$H_{3}^{\delta}$:] $\Omega_{1}=e^{\alpha_{3}}x_{1}(x_{13}-x_{12})+e^{\alpha_{1}}x_{2}(x_{12}-x_{23})+
    e^{\alpha_{2}}x_{3}(x_{23}-x_{13})$\\
$\Omega_{2}=x_{1}x_{2}\left(e^{\alpha_{3}+\alpha_{1}}x_{23}-e^{\alpha_{2}+\alpha_{3}}x_{13}\right)+
x_{2}x_{3}\left(e^{\alpha_{1}+\alpha_{2}}x_{13}-e^{\alpha_{3}+\alpha_{1}}x_{12}\right)$\\
$\phantom{\Omega_{2}=}+x_{3}x_{1}\left(e^{\alpha_{2}+\alpha_{3}}x_{12}-e^{\alpha_{1}+\alpha_{2}}x_{23}\right)+
x_{23}x_{13}\left(e^{\alpha_{3}+\alpha_{1}}x_{1}-e^{\alpha_{2}+\alpha_{3}}x_{2}\right)$\\
$\phantom{\Omega_{2}=}+x_{13}x_{12}\left(e^{\alpha_{1}+\alpha_{2}}x_{2}-e^{\alpha_{3}+\alpha_{1}}x_{3}\right)+
x_{12}x_{23}\left(e^{\alpha_{2}+\alpha_{3}}x_{3}-e^{\alpha_{1}+\alpha_{2}}x_{1}\right)$\\
$\phantom{\Omega_{2}=}-\delta\left(e^{\alpha_{3}+2\alpha_{1}}-e^{\alpha_{3}+2\alpha_{2}}\right)
\left(x_{3}+x_{12}\right)-\delta\left(e^{\alpha_{1}+2\alpha_{2}}-e^{\alpha_{1}+2\alpha_{3}}\right)(x_{1}+x_{23})$\\
$\phantom{\Omega_{2}=}-\delta\left(e^{\alpha_{2}+2\alpha_{3}}-e^{\alpha_{2}+2\alpha_{1}}\right)(x_{2}+x_{13})$
\item[$A_{1}^{\delta}$:] $\Omega_{1}=\alpha_{1}(x_{2}x_{12}-x_{3}x_{13})+\alpha_{2}(x_{3}x_{23}-x_{1}x_{12})+
    \alpha_{3}(x_{1}x_{13}-x_{2}x_{23})$\\
$\phantom{\Omega_{1}=}+\alpha_{1}x_{1}(x_{2}-x_{3})+\alpha_{2}x_{2}(x_{3}-x_{1})+\alpha_{3}x_{3}(x_{1}-x_{2})$\\
$\phantom{\Omega_{1}=}-\alpha_{1}x_{23}(x_{13}-x_{12})-\alpha_{2}x_{13}(x_{12}-x_{23})-
\alpha_{3}x_{12}(x_{23}-x_{13})$\\
$\Omega_{2}=x_{1}x_{2}((\alpha_{1}-\alpha_{2}+\alpha_{3})x_{23}-(\alpha_{3}-\alpha_{1}+\alpha_{2})x_{13})$\\
$\phantom{\Omega_{2}=}+x_{2}x_{3}((\alpha_{2}-\alpha_{3}+\alpha_{1})x_{13}-
(\alpha_{1}-\alpha_{2}+\alpha_{3})x_{12})$\\
$\phantom{\Omega_{2}=}+x_{3}x_{1}((\alpha_{3}-\alpha_{1}+\alpha_{2})x_{12}-
(\alpha_{2}-\alpha_{3}+\alpha_{1})x_{23})$\\
$\phantom{\Omega_{2}=}+x_{23}x_{13}((\alpha_{1}-\alpha_{2}+\alpha_{3})x_{1}-
(\alpha_{3}-\alpha_{1}+\alpha_{2})x_{2})$\\
$\phantom{\Omega_{2}=}+x_{13}x_{12}((\alpha_{2}-\alpha_{3}+\alpha_{1})x_{2}-
(\alpha_{1}-\alpha_{2}+\alpha_{3})x_{3})$\\
$\phantom{\Omega_{2}=}+x_{12}x_{23}((\alpha_{3}-\alpha_{1}+\alpha_{2})x_{3}-
(\alpha_{2}-\alpha_{3}+\alpha_{1})x_{1})$\\
$\phantom{\Omega_{2}=}-\delta\alpha_{1}(\alpha_{2}-\alpha_{3})(\alpha_{1}-\alpha_{2}-\alpha_{3})(x_{1}+x_{23})$\\
$\phantom{\Omega_{2}=}-\delta\alpha_{2}(\alpha_{3}-\alpha_{1})(\alpha_{2}-\alpha_{3}-\alpha_{1})(x_{2}+x_{13})$\\
$\phantom{\Omega_{2}=}-\delta\alpha_{3}(\alpha_{1}-\alpha_{2})(\alpha_{3}-\alpha_{1}-\alpha_{2})(x_{3}+x_{12})$
\item[$A_{2}$:] $R_{1}=\sin(\alpha_{3})\sin(\alpha_{1})(x_{3}x_{23}-x_{1}x_{12})+
    \sin(\alpha_{1})\sin(\alpha_{2})(x_{1}x_{13}-x_{2}x_{23})$\\
$\phantom{\Omega_{1}=}+\sin(\alpha_{2})\sin(\alpha_{3})(x_{2}x_{12}-x_{3}x_{13})+
\sin(\alpha_{3}-\alpha_{1})\sin(\alpha_{1}-\alpha_{2})(x_{3}x_{13}-x_{2}x_{12})$\\
$\phantom{\Omega_{1}=}+\sin(\alpha_{1})\sin(\alpha_{3}-\alpha_{1})(x_{3}x_{1}x_{13}x_{12}-x_{2}x_{3}x_{12}x_{23})$\\
$\phantom{\Omega_{1}=}+\sin(\alpha_{1})\sin(\alpha_{1}-\alpha_{2})(x_{1}x_{2}x_{13}x_{12}-x_{2}x_{3}x_{23}x_{13})$
\end{itemize}
\small
\bibliographystyle{amsalpha}
\bibliography{Quellen}

\providecommand{\bysame}{\leavevmode\hbox to3em{\hrulefill}\thinspace}
\providecommand{\MR}{\relax\ifhmode\unskip\space\fi MR }
\providecommand{\MRhref}[2]{%
  \href{http://www.ams.org/mathscinet-getitem?mr=#1}{#2}
}
\providecommand{\href}[2]{#2}
\begin{thebibliography}{ABS12}

\bibitem[ABS03]{ABS1}
Vsevolod~E. Adler, Alexander~I. Bobenko, and Yuri~B. Suris,
  \emph{{Classification of Integrable Equations on Quad-Graphs. The Consistency
  Approach}}, Comm. Math. Phys. \textbf{233} (2003), pp. 513--543.

\bibitem[ABS12]{octclassi}
\bysame, \emph{Classification of integrable discrete equations of octahedron
  type}, Intern. Math. Research Notices \textbf{2012} (2012), no.~8, pp.
  1822--1889.

\bibitem[Bou72]{bourbaki}
Nicolas Bourbaki, \emph{Commutative algebra}, Elements of Mathematics, Hermann,
  Paris, 1972.

\bibitem[BPS14]{variational}
Raphael Boll, Matteo Petrera, and Yuri~B. Suris, \emph{{What is integrability
  of discrete variational systems?}}, Proc. R. Soc. A \textbf{470} (2014),
  no.~20130550.

\bibitem[BS02]{BS0}
Alexander~I. Bobenko and Yuri~B. Suris, \emph{{Integrable systems on
  quad-graphs}}, Intern. Math. Research Notices \textbf{2002} (2002), 573--611.

\bibitem[BS10]{BS1}
\bysame, \emph{{On the Lagrangian structure of integrable quad-equations}},
  Lett. Math. Phys. \textbf{92} (2010), no.~1, pp. 17--31.

\bibitem[LN09]{LN}
Sarah Lobb and Frank~W. Nijhoff, \emph{Lagrangian multiforms and
  multidimensional consistency}, J. Phys. A: Math. Theor. \textbf{42} (2009),
  no.~454013.

\bibitem[Nij02]{N}
Frank~W. Nijhoff, \emph{{Lax pair for the Adler (lattice Krichever-Novikov)
  system}}, Phys. Lett. A \textbf{297} (2002), pp. 49--58.

\end{thebibliography}
\end{document}